\title{Distributed Branching Bisimulation Minimization\\by Inductive Signatures}
\author{Stefan Blom \qquad \qquad Jaco van de Pol
\institute{University of Twente, Formal Methods and Tools
\thanks{This work has been partially funded by the EU under grant number FP6-NEST STREP 043235 (EC-MOAN).}
\\P.O.-box 217, 7500 AE, Enschede, The Netherlands
           \email{\{sccblom,vdpol\}@cs.utwente.nl}}
}
\tikzset{every picture/.style={->,>=stealth',initial text={}}}
\tikzset{every state/.style={minimum size=3mm,inner sep=1mm}}
\newcommand{\nat}{\mathbb{N}}
\newtheorem{definition}{Definition}
\newtheorem{example}[definition]{Example}
\newtheorem{lemma}[definition]{Lemma}
\newtheorem{proposition}[definition]{Proposition}
\newenvironment{proof}[1][Proof.~]{\noindent {\bf #1}}{\mbox{}\hfill$\Box$\medskip}
\newcommand{\bisim}{\mathrel{\underline{\leftrightarrow}}}
\newcommand{\canonical}[1]{#1\!\!\downarrow}
\lstdefinelanguage{pseudo}{
keywords={proc,while,do,end,for,return,if,then,to,true,false,and,hashtable,or,until,with,in,send,recv,set,int,take,from,add,and,or},
mathescape,
escapeinside={(@}{@)}
}
\begin{document}
\maketitle

\begin{abstract}
We present a new distributed algorithm for state space minimization modulo branching bisimulation. Like its predecessor it uses signatures for refinement, but the refinement process and the signatures have been optimized to exploit the fact that the input graph contains no $\tau$-loops.

The optimization in the refinement process is meant to reduce both the number of iterations needed
and the memory requirements. In the former case we cannot prove that there is an improvement, but
our experiments show that in many cases the number of iterations is smaller.
In the latter case, we can prove that the worst case memory use of the new algorithm is linear in the size of the state space, whereas the old algorithm has a quadratic upper bound.

The paper includes a proof of correctness of the new algorithm and the results of a number of experiments that compare the performance of the old and the new algorithms.

\begin{report}
This report is an extension of \cite{sigmin-inductive-pdmc} with full proofs.
\end{report}
\end{abstract}

% Key words:  state space minimization, branching bisimulation, distributed algorithm

%\begin{abstract}
%We develop a new distributed algorithm for branching bisimulation
%minimization.  It is more efficient than the previous algorithm,
%because it assumes that the state space has no $\tau$-loops. The new
%algorithm is based on a notion of {\em inductive signature}.
%\end{abstract}

\section{Introduction}

The idea of distributed model checking of very large systems, is to
store the state space in the collective memory of a cluster of
workstations, and employ parallel algorithms to analyze the graph. One
approach is to generate the graph in a distributed way, and on-the-fly
(i.e. during generation) run a distributed model checking algorithm.
This is what is done in the DiVinE toolset \cite{DBLP:conf/cav/BarnatBCMRS06}.
This is useful if the system is expected to contain bugs, because
the generation can stop after finding the first bug.

Another approach is to generate the full state space in a distributed
way, and subsequently run a distributed bisimulation reduction
algorithm. The result is usually much smaller, and satisfies the same
temporal logic properties. The minimized graph could be small enough
to analyse with sequential model checkers. This approach is useful
for certification, because many properties can be checked on the
minimized graph. This paper contributes to the second approach.

\medskip

The process-algebraic way of abstracting from actions is to hide them
by renaming them to the invisible action $\tau$. To reason about equivalence of
these abstracted models, branching bisimulation \cite{GlWe96,DBLP:journals/ipl/Basten96}~
can be used. Because branching bisimulation is coarser than strong bisimulation,
this leads to smaller state spaces modulo reduction.

Distributed minimization algorithms have been proposed in
\cite{DBLP:journals/sttt/BlomO05,DBLP:journals/sttt/BlomO05a} for
strong bisimulation, and in \cite{DBLP:journals/entcs/BlomO03} for
branching bisimulation. These are signature-based algorithms, which
work by successively refining the trivial partition, according to the
(local) signature of states with respect to the previous partition.

The best-known sequential algorithm~\cite{DBLP:conf/icalp/GrooteV90}
for branching bisimulation reduction assumes that the state space
has no $\tau$-cycles. The idea is that any $\tau$-cycles can be
removed in linear time, by Tarjan's algorithm to detect (and eliminate)
strongly connected components (SCC)~\cite{DBLP:journals/siamcomp/Tarjan72}.
Eliminating SCCs preserves branching bisimulation.

Because eliminating $\tau$-cycles in distributed graphs seemed
complicated, the algorithm in \cite{DBLP:journals/entcs/BlomO03} works
on {\em any} LTS, i.e. it doesn't assume the absence of
$\tau$-cycles. This generality came with a certain cost: signatures
have to be transported over the transitive closure of silent
$\tau$-steps.
\footnote{A $\tau$-step is silent if the source and destination are
equivalent (with respect to the previous partition).}
For some cases this leads to increased time and memory usage.

Later, several distributed SCC detection (and elimination) algorithms
have been developed~\cite{OP0501,Simona,McLendonIII2005901,BCP09}. It has
already been reported in~\cite{Simona} that running SCC elimination as
a preprocessing step to the branching minimization algorithm
of~\cite{DBLP:journals/entcs/BlomO03}, reduces the overall time.
Note that this gain was achieved even though the minimization algorithm
doesn't assume that the input graph is $\tau$-acyclic.

In this paper, we further improve this method, by exploiting the fact
that the input graph of the minimization algorithm has no
$\tau$-cycles. Using this extra knowledge, we are able to develop a
distributed minimization algorithm that runs in less time and memory.

At the heart of our improved method is a notion of {\em inductive
  signature}.  Normally, during a round of signature computations,
only the signatures of the previous round may be used. The basic idea
of inductive signatures is that the new signature of a state may
depend on the {\em current} signature of its
$\arrowsuper{a}$-successors, provided $a$ is guaranteed to
terminate. We will first illustrate this notion for strong
bisimulation, and then apply it to branching bisimulation, where
$\tau$ is cycle-free, i.e.~$\arrowsuper{\tau}$ is a terminating transition.
Note that if all action labels are terminating, the graph is actually a directed
acyclic graph, for which it is known that there is a linear algorithm for bisimulation
reduction.

{\bf Overview.} In the next section, we will explain the theory
and prove the correctness of the improved signature bisimulation.
In section \ref{algorithm section}, we explain how we turned the
definition of inductive signature bisimulation onto a distributed algorithm
and how we implemented it on top of the LTSmin
toolset\footnote{\url{http://fmt.cs.utwente.nl/tools/ltsmin/}}.
We show the results of running the tool on several problems in Section~\ref{experiments}.

% LocalWords:  DiVinE analyse LTS LTSmin

\section{Theory}

In this section, we start by recalling the basic definitions of LTS and bisimulation.
Followed by the definitions of signature refinement from previous papers.
Then we present inductive signatures for strong bisimulation followed by
inductive signatures for branching bisimulation. We end this section with the
correctness proof for branching bisimulation.

\subsection{Preliminaries}
\label{preliminaries subsection}

First, we fix a notation for
labeled transition systems and recall the definitions
of strong bisimulation and  branching bisimulation \cite{GlWe96,DBLP:journals/ipl/Basten96}.
Our transition systems are labeled with actions from a given
set {\sf Act}. The invisible action $\tau$ is a member of ${\sf Act}$.

\begin{definition}($LTS$)
A labeled transition system ($LTS$)
is a triple $(S,\to,s^0)$, consisting of a set of states
$S$, transitions $\to \subseteq S \times {\sf Act} \times S$
and an initial state $s^0\in S$.
\end{definition}

We write $s \arrowsuper{a} t$ for $(s,a,t)\in \to$, and use 
$\arrowsuper{a}^*$ to denote the transitive reflexive closure of $\arrowsuper{a}$.

Both strong and branching bisimulation can be defined in two ways. As a relation between
two LTSs or as a relation on one LTS. We choose the latter.

\begin{definition}[strong bisimulation]
Given an LTS $(S,\to,s^0)$. A symmetric relation $R \subseteq S \times S$ is a strong bisimulation
if \(
\forall s,t,s' \in S : ~\forall a \in {\sf Act} : ~ s\mathrel{R}t \wedge s \arrowsuper{a} s'
\Rightarrow
\exists t' \in S : ~ t \arrowsuper{a} t' \wedge s' \mathrel{R} t'
\).
\end{definition}

\begin{definition}[branching bisimulation]
Given an LTS $(S,\to,s^0)$. A symmetric relation $R \subseteq S \times S$ is a branching bisimulation
if
\begin{pdmc}
\[
\forall s,t,s' \in S : ~ \forall a \in {\sf Act} : ~ s\mathrel{R}t \wedge s \arrowsuper{a} s'
\Rightarrow
%\left\{\begin{array}{c}
( a\equiv\tau \wedge s' \mathrel{R} t )
%\\
\vee
%\\
(\exists t',t''\in S : ~ t \arrowsuper{\tau}^* t' \wedge s \mathrel{R} t' \wedge t' 
\arrowsuper{a} t'' \wedge s'  \mathrel{R} t'' )
%\end{array}\right.
\]
\end{pdmc}
\begin{report}
\[
\forall s,t,s' \in S : ~ \forall a \in {\sf Act} : ~ s\mathrel{R}t \wedge s \arrowsuper{a} s'
\Rightarrow
\left\{\begin{array}{c}
( a\equiv\tau \wedge s' \mathrel{R} t )
\\
\vee
\\
(\exists t',t''\in S : ~ t \arrowsuper{\tau}^* t' \wedge s \mathrel{R} t' \wedge t' 
\arrowsuper{a} t'' \wedge s'  \mathrel{R} t'' )
\end{array}\right.
\]
\end{report}
Two states $s,t\in S$ are branching bisimilar (denoted $s \bisim t$) if there exists
a branching bisimulation $R$ such that $s \mathrel{R} t$.
\end{definition}

For proving correctness, we will use a few properties:

\begin{proposition}
Given an LTS:
\\[0.2em]
\phantom{X}~~~$\bullet$ the relation $\bisim$ is a branching bisimulation;
\\[0.2em]
\phantom{X}~~~$\bullet$ if $R$ is a branching bisimulation then $R \subseteq \bisim$.
\end{proposition}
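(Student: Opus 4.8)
The plan is to dispatch the second bullet first, as it is immediate, and then to obtain the first bullet by the standard observation that the union of all branching bisimulations is again a branching bisimulation.

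For the second bullet, suppose $R$ is a branching bisimulation and $s \mathrel{R} t$. The existence of such an $R$ is exactly what the definition demands for $s \bisim t$; hence $R \subseteq \bisim$ with no further work. For the first bullet, I would use that $\bisim = \bigcup\{R \mid R \mbox{ is a branching bisimulation}\}$, which follows from the second bullet together with the fact that $\bisim$ is defined as the relation holding between $s$ and $t$ precisely when some branching bisimulation relates them. It then remains to check the two defining conditions for this union. Symmetry is free, since a union of symmetric relations is symmetric. For the transfer condition, assume $s \bisim t$ and $s \arrowsuper{a} s'$. Unfolding $s \bisim t$ produces a single branching bisimulation $R$ with $s \mathrel{R} t$; applying the defining property of $R$ to the step $s \arrowsuper{a} s'$ yields one of the two disjuncts, phrased with $R$ in place of $\bisim$. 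Because every $R$-related pair is $\bisim$-related (again by the second bullet), each occurrence of $R$ in the conclusion may be replaced by $\bisim$, giving exactly the disjunction required of $\bisim$.

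The only delicate point, and the nearest thing to an obstacle, concerns the second disjunct: the witnesses $t'$ and $t''$ must satisfy $s \mathrel{R} t'$ and $s' \mathrel{R} t''$ for the \emph{same} $R$ that witnessed $s \bisim t$. Since all of $s \mathrel{R} t$, $s \mathrel{R} t'$ and $s' \mathrel{R} t''$ are drawn from one and the same bisimulation, the passage from $R$ to $\bisim$ is uniform, and no two distinct bisimulations are ever mixed within a single instance of the transfer step. Beyond this bookkeeping the argument is routine; in particular, no transitivity or closure-under-composition property of $\bisim$ is needed for either bullet.
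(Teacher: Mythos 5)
Your proof is correct, but note that the paper itself does not prove this proposition at all: it simply defers to van Glabbeek and Weijland \cite{GlWe96}. So your argument is not ``the same as the paper's''; it is a self-contained replacement for a citation. The route you take is the standard one: the second bullet is immediate from the definition of $\bisim$ as ``related by some branching bisimulation,'' and the first bullet follows because the union of all branching bisimulations is again a branching bisimulation --- the transfer condition is existentially phrased, hence monotone in the relation occurring in its conclusion, so the witnesses produced by any single $R$ remain valid after enlarging $R$ to $\bisim$. Your closing remark is the most valuable part and is exactly right: no closure under relational composition is needed anywhere. That is precisely where the genuine difficulty in the cited literature lies --- the composition of two branching bisimulations need not be a branching bisimulation, which is why proving that $\bisim$ is transitive (an equivalence) requires the semi-branching detour of Basten \cite{DBLP:journals/ipl/Basten96}. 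Since the proposition as stated only asks that $\bisim$ be the largest branching bisimulation, not that it be an equivalence, your elementary union argument suffices; and the ``delicate point'' you flag --- that all witnesses in one instance of the transfer step come from one and the same $R$ --- is indeed the only place where a careless merge of distinct bisimulations could have gone wrong.
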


For a proof see \cite{GlWe96}.

\smallskip

To talk about bisimulation reduction algorithms, we need the terminology of partition refinement.
Given a set $S$.
\\[0.2em]
\phantom{X}~~~$\bullet$ A set of sets $\{S_1 , \cdots , S_N\}$ is a {\em partition} of $S$ if
$S = S_1 \cup \cdots \cup S_N$ and $\forall i \neq j: S_i \cap S_j = \emptyset$.
Each set $S_i$ is referred to as a {\em block} and must be non-empty.
\\[0.1em]
\phantom{X}~~~$\bullet$  A partition $\{S_1 , \cdots , S_N\}$ is a {\em refinement} of a  partition
$\{S_1' , \cdots , S_M'\}$ if $\forall i \exists j: S_i \subseteq S_j'$.
\\[0.1em]
\phantom{X}~~~$\bullet$  Any partition $\{S_1 , \cdots , S_N\}$ can be represented with an
identity function $ID : S \to \nat$, defined as $ID(s)=i$, if $s\in S_i$.

\subsection{Signature Refinement}

We continue with the previously published variant of signature refinement.
Because many results are correct for finite LTSs only, we assume that both ${\sf Act}$
and all LTSs are finite for the remainder of the paper.

The signature of a state is computed with respect to a partition.
Intuitively, the signature of a state is the set of possible moves (actions) that are possible in
a state with respect to the partition (represented by a number). Formally:

\begin{definition}\label{defsig}\mbox{ }
\begin{itemize}
\item The set of signatures ${\sf Sig}$ is the set of {\em finite subsets} of ${\sf Act} \times \nat$.
\item A partition $\pi$ of an LTS $(S,\to,s^0)$ is a function $\pi:S \to \nat$.
\item A signature function is a function $sig : (S \to \nat) \times S \to {\sf Sig}$, such that
for all isomorphisms $\phi :\nat \to \nat$ and all partitions $\pi$:
\[
\forall s \in S : sig(\phi \circ \pi,s)=\{ (a, \phi(n)) \mid (a,n) \in sig(\pi,s)\}
\]
\end{itemize}
\end{definition}

The last clause is to ensure that the equality on signatures is independent of how numbers are
chosen to represent partitions. This is important because we want to do a refinement process,
where based on a partition, we compute signatures, which we turn into a partition, for which we compute signatures, etc. until the partition is stable. This requires translating signatures (or better
pairs of previous partition numbers and signatures) to integers, which we do by means of given
isomorphisms:
\[
h_1,h_2,\cdots: \nat \times {\sf Sig} \to \nat \enspace .
\]
These isomorphisms exist due to the fact that signatures are finite, which implies that
the set of signatures is countable. The actual refinement process works as follows:

\begin{itemize}
\item Given an initial partition $\pi_0$ of $S$.
\item Given a signature function $sig$.
\item Define
\(
\pi_{i+1} (s) = h_{i+1} ( \pi_i(s) , sig(\pi_i,s) )
\)
\item Define the relation $\pi_i \subseteq S \times S$ as
\(
s \mathrel{\pi_i} t \mbox{, if } \pi_i(s) = \pi_i(t) \enspace.
\)
\item There exists $N\in\nat$ such that the relation
$\pi_N = \pi_{N+1}$. Define $\pi_0^{sig}=\pi_N$.
\end{itemize}

Note that although the definitions of the functions $\pi_{i+1}$ depend on the choice of the isomorphisms $h_{i+1}$, the relations $\pi_i$ will be the same regardless of the choice of $h_{i+1}$, due to the third clause of Definition~\ref{defsig}.
This definition is turned into an algorithm by starting with $\pi_i$ for $i=0$, and computing $\pi_{i+1}$ from $\pi_i$ until
the partition is stable ($\pi_{i+1}\equiv\pi_i$).

\smallskip

For the computed refinement to make sense, we need notions of signatures that correspond to meaningful equivalences.
For example, the signatures of a state according to strong bisimulation and branching bisimulation are

\begin{definition}[classic signatures]
\[
\begin{array}{lll}
{\rm sig_s}(\pi,s) & = & \{ ( a , \pi(t) ) \mid s \arrowsuper{a} t \}
\\
{\rm sig_b}(\pi,s) & = & \{ ( a , \pi(t) ) \mid s \arrowsuper{\tau} s_1 \cdots \arrowsuper{\tau} s_n \arrowsuper{a} t ,
\pi(s)=\pi(s_i) \wedge (a \neq \tau \vee \pi(s)\neq\pi(t)) \}
\end{array}
\]
\end{definition}

The signature of a state says which equivalence classes are reachable from the state by performing an action.
For example in strong bisimulation, if there is an $a$ step from a state $s$ to a state $t$ then
the equivalence class of $t$ is reachable by means of an $a$ step form $s$ which is expressed by putting the pair $(a,\pi(t))$ in the signature of $s$.

The case for branching bisimulation is more complicated. The set of actions includes the {\em invisible action} $\tau$.
The intent of this label is that whatever happens is unimportant. Thus $\tau$ steps are ignored, except if
they change the {\em branching behaviour}. An ignored $\tau$ step is called {\em silent}.
More formally a $\tau$ step is silent with respect to a partition if it is between states in the same equivalence class.

See \cite{DBLP:journals/sttt/BlomO05} and \cite{DBLP:journals/entcs/BlomO03} for more explanation.

\subsection{Inductive signatures for strong bisimulation}

In the classical definition of the strong bisimulation signature,
the signatures depend on the previous partition only.
One may wonder if in some cases the current partition can be used. The answer is yes.
If for each label you consistently use the old partition or consistently use the new partition
then it still works. Of course if we use the current partition then we must ensure that all
signatures are well defined. This is ensured if the subgraph of edges for which we use the
current partition is acyclic. This is guaranteed if we have a {\em well-founded partition}
of the set of actions.
A well-founded partition is a partition $A_?,A_>$ of the set of actions, such that the relation
$\{ (s,t) ~|~ s \arrowsuper{a} t \wedge a\in A_{>}\}$ is well-founded:

\begin{definition}
A pair $\langle A_? , A_> \rangle$ is a well founded partition of ${\sf Act}$ for an LTS $(S,\to,s^0)$
if $A_? \cap A_> = \emptyset$, $A_? \cup A_> = {\sf Act}$ and the LTS is $A_>$ cycle free.
The order $> \subseteq S\times S$ is defined by $> \equiv \big(\cup_{a \in A_>} \arrowsuper{a}\big)^+$.
\end{definition}

Based on the well-founded order $>$ we can give inductive definitions and proofs. 
For example, we can define inductive strong bisimulation signatures:

\begin{definition}[inductive strong bisimulation]
Given an LTS $(S,\to,s^0)$, a well founded partition $\langle A_? , A_> \rangle$ for it,
an initial partition function $\pi_0:S \to \nat$ and isomorphisms $h_1,h_2,\cdots : \nat \times {\sf Sig \to \nat}$.
Define
\[
\begin{array}{lll}
sig_{i+1}(s) & = &
\{ (a,\pi_i(t)) \mid s \arrowsuper{a} t \wedge a \in A_? \} \cup
\{ ( a , \pi_{i+1}( t ) ) \mid s \arrowsuper{a} t \wedge a \in A_> \}
\\
\pi_{i+1} ( s ) & = & h_{i+1} ( \pi_i(s) , sig_{i+1} ( s ) )
\end{array}
\]
\end{definition}

Note that $sig_{i+1}(s)$ is defined inductively in terms of any
$\pi_i$-values, and only $\pi_{i+1}$ values of states that are smaller in $>$.
To show how the definition works and how the choice of the partition influences performance,
we continue with an example.

\begin{example}
Consider the following LTS:
\begin{center}
\begin{tikzpicture}[baseline=(0)]
\node[state,initial] (0) {$0$};
\node[state] (1) [right=of 0] {$1$};
\node[state] (2) [right=of 1] {$2$};
\node[state] (3) [right=of 2] {$3$};
\node[state] (4) [right=of 3] {$4$};
\node[state] (5) [right=of 4] {$5$};
\path
  (0) edge [bend left=30] node[above] {$a$} (1)
  (1) edge [bend left=30] node[above] {$a$} (2)
  (2) edge [bend left=30] node[above] {$a$} (3)
  (3) edge [bend left=30] node[above] {$a$} (4)
  (4) edge [bend left=30] node[above] {$a$} (5)
  (5) edge [bend left=30] node[below] {$b$} (4)
  (4) edge [bend left=30] node[below] {$b$} (3)
  (2) edge [bend left=30] node[below] {$b$} (1)
  (1) edge [bend left=30] node[below] {$b$} (0);
\end{tikzpicture}
\end{center}
If we take $A_> := \{a\}$, and set $\pi_0(s):=0$ for all states,
we get the following run:
\[\begin{array}{rcll}
sig_1(5) & := & \{(b,0)\}       & \pi_1(5) = 1 \\
sig_1(4) & := & \{(b,0),(a,1)\} & \pi_1(4) = 2 \\
sig_1(3) & := & \{(a,2)\} & \pi_1(3) = 3 \\
sig_1(2) & := & \{(b,0),(a,3)\} & \pi_1(2) = 4 \\
sig_1(1) & := & \{(b,0),(a,4)\} & \pi_1(1) = 5 \\
sig_1(0) & := & \{(a,5)\} & \pi_1(0) = 6 \\
\end{array}\]
Note that every state got a different signature, so in this case
we reach the final partition in one round. Also note that the
order of computation was completely fixed, because the label $a$ imposes
a total order on the states.

Next, consider the same example, but let $A_> = \{b\}$. Note that
this is also terminating. Again,
we take $\pi_0(s)=0$ for any state $s$.
\[\begin{array}{rcll@{~~~~~,~~~~~~}rcll}
sig_1(0) & := & \{(a,0)\}       & \pi_1(0) = 1 &
sig_1(3) & := & \{(a,0)\}       & \pi_1(3) = 1 \\
sig_1(1) & := & \{(a,0),(b,1)\} & \pi_1(1) = 2 &
sig_1(4) & := & \{(a,0),(b,1)\} & \pi_1(4) = 2 \\
sig_1(2) & := & \{(a,0),(b,2)\} & \pi_1(2) = 3 & 
sig_1(5) & := & \{(b,2)\}       & \pi_1(5) = 4 \\
\hline
sig_2(0) & := & \{(a,2)\}       & \pi_2(0) = 5 &
sig_2(3) & := & \{(a,2)\}       & \pi_2(3) = 5 \\
sig_2(1) & := & \{(a,3),(b,5)\} & \pi_2(1) = 6 &
sig_2(4) & := & \{(a,4),(b,5)\} & \pi_2(4) = 7 \\
sig_2(2) & := & \{(a,1),(b,6)\} & \pi_2(2) = 8 &
sig_2(5) & := & \{(b,7)\}       & \pi_2(5) = 9 \\
\hline
sig_3(0) & := & \{(a,6)\}       & \pi_3(0) = 10 &
sig_3(3) & := & \{(a,7)\}       & \pi_3(3) = 11 \\
sig_3(1) & := & \{(a,8),(b,10)\} & \pi_3(1) = 12 &
sig_3(4) & := & \{(a,9),(b,11)\} & \pi_3(4) = 13 \\
sig_3(2) & := & \{(a,5),(b,12)\} & \pi_3(2) = 14 &
sig_3(5) & := & \{(b,13)\}       & \pi_3(5) = 15 \\
\end{array}\]
Note that this time we need three iterations, but there is some
room for parallel computation, because the signature of $0$
and $3$ can be computed independently, because they
have no $b$ successors.
\end{example}

\subsection{Inductive signatures for branching bisimulation}

In the splitting procedure of the Groote-Vaandrager algorithm, whenever a state
has one or more $\tau$ successors inside the block that is being split, the algorithm
tests if the behavior of one of those $\tau$ successors includes all of the behavior of the state.
If such a successor exists, then the state is put in the same block as that successor.
Because of this splitting procedure the graph has to be $\tau$-cycle free.
A similar effect can be achieved by exploiting $\tau$ cycle freeness
when we define the branching signature. Thus, we
assume that $\tau\in A_>$ for all partitions $\langle A_?,A_> \rangle$.

The inductive branching signature is computed in two steps.
First, the $pre$-signature is computed, which consists of all transitions to all successors, including $\tau$-steps
to possibly equivalent states. Second, we look for a $\tau$-successor in the same block of the previous partition which contains all $pre$ behavior except the $\tau$ step to that successor.
If such a successor is found then the signature is the signature of that successor,
otherwise the signature is the $pre$-signature:

\begin{definition}[inductive branching bisimulation]
Given an LTS $(S,\to,s^0)$, a well founded partition $\langle A_? , A_> \rangle$ for it with $\tau\in A_>$
and an initial partition function $\pi_0:S \to \nat$.
Define
\[
\begin{array}{lll}
pre_{i+1}(s) & = & \{ (a,\pi_i(t)) \mid s \arrowsuper{a} t \wedge a \in A_? \} \cup
\{ ( a , \pi_{i+1}( t ) ) \mid s \arrowsuper{a} t \wedge a \in A_> \}
\\
sig_{i+1}(s) &=& \mbox{if there exists a $t$ with $s \arrowsuper{\tau} t$,
 $\pi_i(s)=\pi_i(t)$ and $pre_{i+1}(s) \subseteq sig_{i+1}(t)\cup \{(\tau,\pi_{i+1}(t))\}$}
 \\
 && \mbox{~~~then } sig_{i+1}(t)
 \\
 && \mbox{~~~else } pre_{i+1}(s)
\\
\pi_{i+1} ( s ) & = & h_{i+1} ( \pi_i(s) , sig_{i+1} ( s ) )
\end{array}
\]
\end{definition}

It is not immediately obvious that this is well-defined: what if there exists more
than one $\tau$-successor that passes the test? The answer is: then they have the same signature.
We prove this in lock step with the observation that if a signature $\sigma$ contains a pair $(a,n)$,
then any state with signature $\sigma$ has a path of silent $\tau$ steps to a state where an $a$ step is possible to a final state in partition $n$.

To avoid unnecessary case distinctions between $a \in A_?$ and $a\in A_>$, we introduce the notation
\[
\hat{a}\stackrel{\rm def}{=}\left\{\begin{array}{ll}0&\mbox{, if } a \in A_?
\\
1&\mbox{, if } a \in A_>\end{array}\right.
\]
This allows us to abbreviate ``$\pi_i(s)$ if $a\in A_?$ and
$\pi_{i+1} (s)$ if $a \in A_>$'' by $\pi_{i+\hat{a}}(s)$.
\begin{pdmc}
Due to space restrictions, we only sketch the essentials of the proofs.
Full proofs can be found in \cite{sigmin-inductive-report}.
\end{pdmc}

\begin{proposition}\label{well-defined prop}
For all states $s$:
\begin{enumerate}
\item  If there exist $t_1,t_2$ with $s \arrowsuper{\tau} t_1$, 
$s \arrowsuper{\tau} t_2$,
 $\pi_i(s)=\pi_i(t_1)=\pi_i(t_2)$, $pre_{i+1}(s) \subseteq sig_{i+1}(t_1)\cup \{(\tau,\pi_{i+1}(t_1))\}$
 and $pre_{i+1}(s) \subseteq sig_{i+1}(t_2)\cup \{(\tau,\pi_{i+1}(t_2))\}$ then $sig_{i+1}(t_1)=sig_{i+1}(t_2)$.
\item If $(a,n)\in sig_{i+1} ( s )$ then
$\exists s_1,\cdots, s_m, t : s \arrowsuper{\tau} s_1 \cdots \arrowsuper{\tau} s_m \arrowsuper{a} t
\wedge \pi_i(s)=\pi_i(s_j) \wedge n = \pi_{i+\hat{a}}(t)$.
\end{enumerate}
\end{proposition}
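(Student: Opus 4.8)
The plan is to prove both statements simultaneously by well-founded induction on $s$ with respect to the order $>$. The key enabling fact is that $\tau\in A_>$, so every $\tau$-successor $t$ of $s$ satisfies $t<s$; hence the recursive occurrences of $sig_{i+1}(t)$ in the definition of $sig_{i+1}(s)$ refer only to strictly smaller states, and the whole definition is a legitimate well-founded recursion—the only possible ambiguity, several qualifying $\tau$-successors, being exactly what item~1 resolves. Fixing $s$, I assume both items hold for every state strictly below $s$.

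I would prove item~2 first, since its inductive step needs only the induction hypothesis for item~2, so there is no circularity. Unfolding the definition of $sig_{i+1}(s)$ gives two cases. If $sig_{i+1}(s)=pre_{i+1}(s)$, then $(a,n)\in pre_{i+1}(s)$ means there is a transition $s\arrowsuper{a}t$ with $n=\pi_{i+\hat{a}}(t)$, which is the required path with empty $\tau$-prefix ($m=0$). If instead $sig_{i+1}(s)=sig_{i+1}(t)$ for a qualifying $\tau$-successor $t$, then $(a,n)\in sig_{i+1}(t)$; applying the induction hypothesis at $t$ (valid since $t<s$) yields a silent path out of $t$, which I prepend with the step $s\arrowsuper{\tau}t$. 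The states on the $\tau$-prefix stay in the block of $s$ because $\pi_i(s)=\pi_i(t)$ (the test condition) combines with the $\pi_i(t)=\pi_i(\cdot)$ equalities supplied by the hypothesis.

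For item~1, suppose $t_1,t_2$ both pass the test. Since $s\arrowsuper{\tau}t_j$ and $\tau\in A_>$, both $(\tau,\pi_{i+1}(t_1))$ and $(\tau,\pi_{i+1}(t_2))$ lie in $pre_{i+1}(s)$. If $\pi_{i+1}(t_1)=\pi_{i+1}(t_2)$, then injectivity of $h_{i+1}$, together with the already-known $\pi_i(t_1)=\pi_i(t_2)$, forces $sig_{i+1}(t_1)=sig_{i+1}(t_2)$, and we are done. The remaining work is to rule out $\pi_{i+1}(t_1)\neq\pi_{i+1}(t_2)$: combining the two test inclusions with the two memberships just noted, this inequality would force $(\tau,\pi_{i+1}(t_2))\in sig_{i+1}(t_1)$ and $(\tau,\pi_{i+1}(t_1))\in sig_{i+1}(t_2)$.

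This is where the main obstacle lies, and where item~2 together with $\tau$-acyclicity do the real work. From $(\tau,\pi_{i+1}(t_2))\in sig_{i+1}(t_1)$, item~2 (induction hypothesis at $t_1<s$) produces a state $v$ with $t_1\arrowsuper{\tau}^{+}v$ and $\pi_{i+1}(v)=\pi_{i+1}(t_2)$; by injectivity of $h_{i+1}$ this gives $sig_{i+1}(v)=sig_{i+1}(t_2)$, hence $(\tau,\pi_{i+1}(t_1))\in sig_{i+1}(v)$. Applying item~2 again at $v$ produces a further $\tau$-descendant realizing $\pi_{i+1}(t_1)$, and so on. Because each of the two signatures $sig_{i+1}(t_1)$ and $sig_{i+1}(t_2)$ keeps containing a $\tau$-pair pointing to the other, this ping-pong never terminates and yields an infinite strictly $>$-descending chain of $\tau$-steps, contradicting $A_>$-cycle-freeness. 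Hence $\pi_{i+1}(t_1)=\pi_{i+1}(t_2)$ after all, which completes item~1 and the induction.
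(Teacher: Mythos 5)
Your proof is correct and follows essentially the same route as the paper's: a simultaneous induction along $\tau$-steps, with item~2 proved by unfolding the two branches of the definition, and item~1 by the ping-pong construction of an infinite $\tau$-path contradicting cycle-freeness. The only difference is cosmetic—you make explicit (via injectivity of $h_{i+1}$) the step ruling out $\pi_{i+1}(t_1)=\pi_{i+1}(t_2)$, which the paper leaves implicit when it passes from $\{(\tau,\pi_{i+1}(t_1)),(\tau,\pi_{i+1}(t_2))\}\subseteq pre_{i+1}(s)$ to the cross-memberships in the signatures.
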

\begin{proof}
\begin{pdmc}
We prove both parts at once by induction on $\arrowsuper{\tau}^*$.
\\
Given a state $s$, we prove part 1 by contradiction.
Suppose that $sig_{i+1}(t_1)\neq sig_{i+1}(t_2)$.
\\
By definition $\{ (\tau,\pi_{i+1}(t_1)) ,  (\tau,\pi_{i+1}(t_2))\}\subseteq pre_{i+1}(s)$.
This implies that $(\tau,\pi_{i+1}(t_1)) \in  sig_{i+1}(t_2)$ and
$(\tau,\pi_{i+1}(t_2)) \in  sig_{i+1}(t_1)$. By using part 2,
we construct an infinite path
$s \arrowsuper{\tau} t_1 \equiv s_1 \arrowsuper{\tau}^+ s_1' \arrowsuper{\tau}^+ s_2 \arrowsuper{\tau}^+ \cdots $
with $\pi_{i+1}(s_i)=\pi_{i+1}(t_1)$ and $\pi_{i+1}(s_i')=\pi_{i+1}(t_2)$. This infinite path contradicts the cycle freeness.
\\
The proof of part 2 is elementary.
\end{pdmc}
\begin{report}
We prove both parts at once by induction on $\arrowsuper{\tau}^*$.
\\
Given a state $s$, we prove part 1 by contradiction.
Suppose that $sig_{i+1}(t_1)\neq sig_{i+1}(t_2)$. Then
\[
\{ (\tau,\pi_{i+1}(t_1)) ,  (\tau,\pi_{i+1}(t_2))\}\subseteq pre_{i+1}(s)
\]
and therefore:
\[
(\tau,\pi_{i+1}(t_1)) \in  sig_{i+1}(t_2) \mbox{ and } (\tau,\pi_{i+1}(t_2)) \in  sig_{i+1}(t_1)
\]
Let $s_1 = t_1$. Because $s \arrowsuper{\tau} t_1$, the induction hypothesis applies to
$t_1$. Thus by applying part 2, there exists a state $s_1'$, such that $s_1 \arrowsuper{\tau}^+
s_1'$ and $\pi_{i+1}(s_1')=pi_{i+1}(t_2)$. This implies that $sig_{i+1}(s_1')=sig_{i+1}(t_2)$.
So we can find $s_2$, such that $s_1' \arrowsuper{\tau}^+ s_2$ and $\pi_{i+1}(s_2)=\pi_{i+1}(t_1)$.
In other words we get an infinite sequence
\[
s_1 \arrowsuper{\tau}^+ s_1 \arrowsuper{\tau}^+ s_2 \arrowsuper{\tau}^+ \cdots
\]
In a finite state space this implies the existence of a $\tau$ cycle. Contradiction.
\\
Part 2  is proven by case distinction. We have two cases:
\begin{description}
\item[$sig_{i+1}(s)=pre_{i+1}(s)$] If $(a,n)\in pre_{i+1}(s)$ then for some $t$: $s \arrowsuper{a} t$
and $n = \pi_{i+\hat{a}}(t)$.
\item[$s \arrowsuper{\tau} t \wedge \pi_i(s)=\pi_i(t) \wedge sig_{i+1}(s)=sig_{i+1}(t)$]
By induction hypothesis, we have a sequence $t \arrowsuper{\tau} t_1 \arrowsuper{\tau} \cdots t_m \arrowsuper{a} t'$ satisfying the requirement for $t$. Which means that the requirement for $s$ is
satisfied by
\[
s \arrowsuper{\tau} t \arrowsuper{\tau} t_1 \arrowsuper{\tau} \cdots t_m \arrowsuper{a} t'
\]
\end{description}
\end{report}
\end{proof}

We will show how the new definition works and is different from the approach of \cite{DBLP:journals/entcs/BlomO03}, by means of an example:

\begin{example}
Consider the following three examples.  We have only drawn the nodes of the graphs which are relevant.
Let $\pi_0(s)=0$ for all $s$ and $\pi_i(s)=0$ for all nodes $s$ which have been omitted.
\begin{center}
\begin{tabular}{ccc}
\begin{tikzpicture}[node distance=1.5em,baseline=(1)]
\node[state,initial] (0) {$0$};
\node[state] (1) [above right=of 0] {$1$};
\node[state] (2) [below right=of 0] {$2$};
\node (3) [right=of 1] {};
\node (4) [right=of 2] {};
\node (5) [below=of 0] {};
\path
  (0) edge node[left, near end] {$\tau$} (1)
  (0) edge node[left, near end] {$\tau$} (2)
  (1) edge node[right] {$\tau$} (2)
  (0) edge node[left] {$a$} (5)
  (1) edge node[above] {$a,b$} (3)
  (2) edge node[above] {$a,b,c$} (4);
\end{tikzpicture}
&
\begin{tikzpicture}[node distance=1.5em,baseline=(1)]
\node[state,initial] (0) {$0$};
\node[state] (1) [above right=of 0] {$1$};
\node[state] (2) [below right=of 0] {$2$};
\node (3) [right=of 1] {};
\node (4) [right=of 2] {};
\node (5) [below=of 0] {};
\path
  (0) edge node[left, near end] {$\tau$} (1)
  (0) edge node[left, near end] {$\tau$} (2)
  (0) edge node[left] {$a$} (5)
  (1) edge node[above] {$a,b$} (3)
  (2) edge node[above] {$a,b,c$} (4);
\end{tikzpicture}
&
\begin{tikzpicture}[node distance=1.5em,baseline=(2)]
\node[state,initial] (0) {$0$};
\node[state] (1) [right=of 0] {$1$};
\node[state] (2) [above right=of 1] {$2$};
\node[state] (3) [below right=of 1] {$3$};

\node (4) [above right=of 0] {};
\node (5) [below right=of 0] {};
\node (6) [right=of 2] {};
\node (7) [right=of 3] {};

\path
  (0) edge node[left, near end] {$a$} (4)
  (0) edge node[left, near end] {$b$} (5)
  (0) edge node[above] {$\tau$} (1)
  (1) edge node[left, near end] {$\tau$} (2)
  (1) edge node[left, near end] {$\tau$} (3)
  (2) edge node[above] {$a$} (6)
  (3) edge node[below] {$b$} (7);
\end{tikzpicture}
\end{tabular}
\end{center}

\noindent Let $A_>=\{\tau\}$. Then for the left-most LTS on the left, we get:
\[
\begin{array}{rcll}
pre_1(2) & := & \{(a,0),(b,0),(c,0)\} &  \\
sig_1(2) & := & \{(a,0),(b,0),(c,0)\} & \pi_1(2) = 1
\\
pre_1(1) & := & \{(a,0),(b,0),(\tau,1)\} & Note: \{(a,0),(b,0),(\tau,1)\}\subseteq \{(a,0),(b,0),(c,0),(\tau,1)\}\\
sig_1(1) & := & \{(a,0),(b,0),(c,0)\} & \pi_1(1) = 1
\\
pre_1(0) & := & \{(a,0),(\tau,1)\} & Note: \{(a,0),(\tau,1)\}\subseteq \{(a,0),(b,0),(c,0),(\tau,1)\}\\
sig_1(0) & := & \{(a,0),(b,0),(c,0)\} & \pi_1(0) = 1
\end{array}\]
Note that $|dom(sig_1)|=|dom(sig_0)| = 1$, so $sig_1$ is stable, and all $\tau$-steps are silent.
\\[0.5ex]
For the middle LTS, we obtain:
\[\begin{array}{rcll}
pre_1(2) & := & \{(a,0),(b,0),(c,0)\} &  \\
sig_1(2) & := & \{(a,0),(b,0),(c,0)\} & \pi_1(2) = 1 \\

pre_1(1) & := & \{(a,0),(b,0)\} & \\
sig_1(1) & := & \{(a,0),(b,0)\} & \pi_1(1) = 2 \\

pre_1(0) & := & \{(a,0),(\tau,1),(\tau,2)\} & Note: 
\{(a,0),(\tau,1),(\tau,2)\}\not\subseteq \{(a,0),(b,0),(c,0),(\tau,1)\},\\
&&& \phantom{Note: }
\{(a,0),(\tau,1),(\tau,2)\}\not\subseteq \{(a,0),(b,0),(\tau,2)\}\\
sig_1(0) & := & \{(a,0),(\tau,1),(\tau,2)\} &  \pi_1(0) =  3
\end{array}\]
Note that $|dom(sig_1)|=3$, which cannot increase, so again $sig_1$ is stable.
In this case, none of the $\tau$-steps is silent.
\\[0.5ex]
For the LTS on the right, we get
\[\begin{array}{rcll@{~~~~~,~~~~~~}rcll}
sig_1(2) & := & \{(a,0)\} & \pi_1(2) = 1 &
sig_1(3) & := & \{(b,0)\} & \pi_1(3) = 2 \\
sig_1(1) & := & \{(\tau,1),(\tau,2)\} & \pi_1(1) = 3 &
sig_1(0) & := & \{(a,0),(b,0),(\tau,3)\} & \pi_1(0) = 4 \\
\end{array}\]
Already after one iteration it is detected that none of the $\tau$-steps is silent.
In the original definition in \cite{DBLP:journals/entcs/BlomO03}, this would be detected later, as the following example shows.
\[\begin{array}{r@{\,}c@{\,}ll@{~~~~~,~~~~~~}r@{\,}c@{\,}ll}
sigb_1(2) & := & \{(a,0)\} & \pi_1(2)=1 &
sigb_1(3) & := & \{(b,0)\} & \pi_1(3)=2 \\
sigb_1(1) & := & \{(a,0),(b,0)\} & \pi_1(1)=3 &
sigb_1(0) & := & \{(a,0),(b,0)\} & \pi_1(0)=3\\
\hline
sigb_2(2) & := & \{(a,0)\} & \pi_2(2)=1 &
sigb_2(3) & := & \{(b,0)\} & \pi_2(3)=2 \\
sigb_2(1) & := & \{(\tau,1),(\tau,2)\} & \pi_2(1)=4 &
sigb_2(0) & := & \{(a,0),(b,0),(\tau,1),(\tau,2)\} & \pi_2(0)=5\\
\end{array}\]
Note the two differences between inductive and classic signatures.
First, the fact that $0\arrowsuper{\tau}1$ is not silent is detected in
the first iteration by inductive and the second by classic signatures.
Second, in the inductive case the size of the signature is limited by
the number of outgoing transitions in the classic case it is not.
\end{example}

\subsection{Correctness}

We use the same proof technique as in previous work. That is,
we prove that bisimilar states are always in the same block
and that if a $\pi_i$ partition is stable ($\pi_i$ and $\pi_{i+1}$
denote the same relation) then $\pi_i$ is a bisimulation.
Thus because $\bisim$ is the coarsest bisimulation, we must have
that $\pi_i$ coincides with $\bisim$.
\begin{pdmc}
Again, we include proof sketches only. Full proofs are available in
\cite{sigmin-inductive-report}.
\end{pdmc}

In this section we work on a given LTS $(S,\to,s^0)$ and
well-founded partition $(A_?,A_>)$, with $\tau\in A_>$. We consider inductive branching bisimulation
and we let $s \bisim_i t$ denote $\pi_i(s) = \pi_i(t)$.

One of the properties of a $\tau$-cycle free LTS is that given a state
one can always follow $\tau$ steps to bisimilar states, until a state
is found that has no such step. These states are called canonical:

\begin{definition}
A state $s$ is \emph{canonical} (denoted $\canonical{s}$) if $\neg\exists s' :~ s \arrowsuper{\tau} s' \wedge s \bisim s'$.
\end{definition}

Canonical states have the important property that all visible behavior is present
as an immediate step rather than as a sequence of one or more invisible steps followed by a
visible step.

\begin{lemma}\label{preservation lemma}
If ${\bisim}\subseteq{\bisim_i}$ then
for all states $s,t$ we have $(s\bisim t \wedge \canonical{t}) \Rightarrow s \bisim_{i+1} t$
\end{lemma}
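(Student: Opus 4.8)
The plan is to turn the statement into an equality of signatures and then establish that equality by exploiting canonicity. Since $\pi_{i+1}(u)=h_{i+1}(\pi_i(u),sig_{i+1}(u))$ and each $h_{i+1}$ is a bijection, hence injective, $s\bisim_{i+1}t$ is equivalent to the conjunction $\pi_i(s)=\pi_i(t)$ and $sig_{i+1}(s)=sig_{i+1}(t)$. The first conjunct is immediate: from $s\bisim t$ and the hypothesis ${\bisim}\subseteq{\bisim_i}$ we get $\pi_i(s)=\pi_i(t)$, i.e.\ $s\bisim_i t$. So the whole lemma reduces to proving $sig_{i+1}(s)=sig_{i+1}(t)$.

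Next I would isolate what canonicity buys on the $t$-side. The definition of $\canonical{t}$ forbids a single silent $\tau$-step, and combined with the stuttering property of branching bisimulation (which holds since $\bisim$ is a branching bisimulation: intermediate states on a $\tau$-path between $\bisim$-equivalent endpoints are themselves $\bisim$-equivalent) it forbids any path $t\arrowsuper{\tau}^* t'$ with $t\bisim t'$ and $t'\neq t$. Feeding this into the branching clause for $s\bisim t$ collapses every match: for each $s\arrowsuper{a}s'$ that is not a silent $\tau$-step reabsorbed by $t$ (not the case $a\equiv\tau\wedge s'\bisim t$), the witness $t\arrowsuper{\tau}^* t'\arrowsuper{a}t''$ must have $t'=t$, so $t$ matches immediately via $t\arrowsuper{a}t''$ with $s'\bisim t''$. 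This is exactly the ``all visible behaviour is present as an immediate step'' property stated just before the lemma.

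I would then prove $sig_{i+1}(s)=sig_{i+1}(t)$ by well-founded induction with respect to $>$; this is legitimate because $\tau\in A_>$, so $\tau$-successors, and more generally $A_>$-successors, are strictly $>$-smaller. For $A_?$-labelled moves the signatures record only $\pi_i$-values, which I can match directly from $\pi_i(s)=\pi_i(t)$ together with the immediate-match property; for $A_>$-labelled moves they record $\pi_{i+1}$-values of $>$-smaller successors, to which the induction hypothesis applies after passing to canonical representatives (reachable by silent $\tau$-steps) to convert a $\bisim$-match into equality of the recorded numbers. The two branches of the definition of $sig_{i+1}$ are reconciled with the help of Proposition~\ref{well-defined prop}: part~2 re-expands every signature entry into an explicit silent path, so the if-branch collapse on the $s$-side can be aligned with the immediate behaviour of $t$, while part~1 guarantees that this collapse is unambiguous.

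The main obstacle is the mismatch between the two notions of ``silent'' in play. The signature definition tests silence with the coarser relation $\bisim_i$ (equality of $\pi_i$), whereas the branching-bisimulation matching that transports behaviour from $s$ to $t$ tests silence with the finer relation $\bisim$. A $\pi_i$-silent $\tau$-successor of $s$ along which $sig_{i+1}(s)$ collapses need not be $\bisim$-equivalent to $s$, so the induction hypothesis cannot be applied to it verbatim. The bridge is precisely canonicity of $t$ together with ${\bisim}\subseteq{\bisim_i}$: canonicity forces every genuine visible behaviour of $t$ to be immediate, and ${\bisim}\subseteq{\bisim_i}$ lets me downgrade the $\bisim$-matches produced by branching bisimulation to the $\pi_i$- and $\pi_{i+1}$-level equalities that the signatures actually compare. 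Carrying out this reconciliation exactly --- showing that the $\pi_i$-silent collapse on the $s$-side lands on the same set of pairs as the immediate behaviour on the $t$-side --- is the delicate core of the argument.
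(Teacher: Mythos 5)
Your reduction of the lemma to $sig_{i+1}(s)=sig_{i+1}(t)$, your use of canonicity plus stuttering to make every non-silent move of $t$ immediate, and your instinct to pass to canonical representatives before applying the induction hypothesis are all sound, and they do appear (at least implicitly) in the paper's proof. But the core of your plan --- comparing the two signatures \emph{elementwise} by re-expanding entries with Proposition~\ref{well-defined prop}(2) and ``aligning'' them with the immediate behaviour of $t$ --- is exactly where the argument breaks, and you concede as much yourself: you name the $\bisim_i$-versus-$\bisim$ silence mismatch as ``the delicate core'' but never resolve it. The difficulty is concrete: $sig_{i+1}(s)$ may be defined through the if-branch using a witness $s^*$ with $s \arrowsuper{\tau} s^*$, $\pi_i(s)=\pi_i(s^*)$ but $s\not\bisim s^*$; then $sig_{i+1}(s)=sig_{i+1}(s^*)$, the paths produced by Proposition~\ref{well-defined prop}(2) pass through states that are only $\pi_i$-equivalent to $s$, branching bisimulation gives no way to match such paths from $t$, and the induction hypothesis (which speaks only of $\bisim$-related pairs) is inapplicable to $(s^*,t)$. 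Asserting that canonicity of $t$ together with ${\bisim}\subseteq{\bisim_i}$ ``bridges'' this is a restatement of the problem, not an argument.

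The paper closes this gap with a decomposition your proposal lacks. It never compares signature entries; it works at the level of \emph{pre}-signatures, first proving $pre_{i+1}(s)\subseteq pre_{i+1}(t)\cup\{(\tau,\pi_{i+1}(t))\}$ (using canonicity of $t$ exactly as you do), and then splitting on whether $s$ itself is canonical. If $\canonical{s}$, the same argument in both directions gives $pre_{i+1}(s)=pre_{i+1}(t)$, and part 2 of Proposition~\ref{two observations prop} --- equal pre-signatures imply equal signatures, a statement you never invoke --- lifts this to $sig_{i+1}(s)=sig_{i+1}(t)$; it is this proposition, not Proposition~\ref{well-defined prop}(2), that absorbs a possibly non-bisimilar if-branch witness, because that witness appears in the pre-signature only as a pair $(\tau,\pi_{i+1}(\cdot))$. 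If $s$ is not canonical, take $s\arrowsuper{\tau}s'$ with $s\bisim s'$: the induction hypothesis applied to the pair $(s',t)$ gives $sig_{i+1}(s')=sig_{i+1}(t)$, the displayed inclusion then shows that $s'$ itself satisfies the if-condition in the definition of $sig_{i+1}(s)$, and Proposition~\ref{well-defined prop}(1) --- which you cite only as ``unambiguity'' --- is what lets one conclude $sig_{i+1}(s)=sig_{i+1}(s')=sig_{i+1}(t)$ regardless of which witness the definition actually selects. Without this case split on $\canonical{s}$ and the pre-signature bookkeeping, your elementwise plan cannot be completed; with them, your Proposition~\ref{well-defined prop}(2) machinery is not needed at all.
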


To prove this, we need two properties.
\begin{proposition}\label{two observations prop}
For all states $s,t$, we have
\\[0.em]
\phantom{2}~~~~1. $pre_{i+1}(s) \subseteq sig_{i+1}(s)\cup \{ ( \tau , \pi_{i+1}(s) ) \}$.
\\
\phantom{1}~~~~2. $pre_{i+1}(s) \subseteq sig_{i+1}(s)\cup \{ ( \tau , \pi_{i+1}(s) ) \}$.
\begin{pdmc}
\end{pdmc}
\begin{report}
\begin{enumerate}
\item $pre_{i+1}(s) \subseteq sig_{i+1}(s)\cup \{ ( \tau , \pi_{i+1}(s) ) \}$.
\item if $pre_{i+1}(s) = pre_{i+1}(t)$ then $sig_{i+1}(s) = sig_{i+1}(t)$.
\end{enumerate}
\end{report}
\end{proposition}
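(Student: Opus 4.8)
The plan is to prove the two parts separately: part~1 is a routine case split on the definition of $sig_{i+1}$, whereas part~2 (read, as the report version makes clear, as ``$pre_{i+1}(s)=pre_{i+1}(t)$ implies $sig_{i+1}(s)=sig_{i+1}(t)$'') is where the real work lies. Two conventions will be used throughout. First, $h_{i+1}$ is a bijection, so $\pi_{i+1}(u)=\pi_{i+1}(v)$ forces both $\pi_i(u)=\pi_i(v)$ and $sig_{i+1}(u)=sig_{i+1}(v)$. Second, since $\tau\in A_>$, a pair $(\tau,n)$ belongs to $pre_{i+1}(u)$ exactly when $u$ has a successor $w$ with $u\arrowsuper{\tau}w$ and $\pi_{i+1}(w)=n$; this lets me read successors of $u$ off its pre-signature.

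For part~1 I would distinguish the two branches of the definition of $sig_{i+1}(s)$. In the \emph{else} branch $sig_{i+1}(s)=pre_{i+1}(s)$ and the inclusion is immediate. In the \emph{then} branch there is a silent successor $t'$ (that is, $s\arrowsuper{\tau}t'$ with $\pi_i(s)=\pi_i(t')$) satisfying $pre_{i+1}(s)\subseteq sig_{i+1}(t')\cup\{(\tau,\pi_{i+1}(t'))\}$ and $sig_{i+1}(s)=sig_{i+1}(t')$. The key observation is that $sig_{i+1}(s)=sig_{i+1}(t')$ together with $\pi_i(s)=\pi_i(t')$ yields $\pi_{i+1}(s)=\pi_{i+1}(t')$ by the definition of $\pi_{i+1}$. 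Hence the right-hand side above equals $sig_{i+1}(s)\cup\{(\tau,\pi_{i+1}(s))\}$, which is exactly the claimed inclusion.

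For part~2 I first note that the statement really needs the extra hypothesis $\pi_i(s)=\pi_i(t)$ (equivalently $s\bisim_i t$): a silent $\tau$-step is selected only when it stays inside the same $\pi_i$-block, and without this assumption one can exhibit two states with equal $pre$ but different $sig$. This is harmless because every application of the proposition is to states already known to be $\pi_i$-equivalent. Writing $P:=pre_{i+1}(s)=pre_{i+1}(t)$, call $u$ \emph{passing for} $s$ if $s\arrowsuper{\tau}u$, $\pi_i(s)=\pi_i(u)$ and $P\subseteq sig_{i+1}(u)\cup\{(\tau,\pi_{i+1}(u))\}$. If neither state has a passing successor, both signatures equal $P$ and we are done. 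Otherwise the central claim is that every passing successor $t'$ of $t$ is mirrored by a passing successor of $s$ with the same signature. Indeed, since $t\arrowsuper{\tau}t'$ and $\tau\in A_>$, the pair $(\tau,\pi_{i+1}(t'))$ lies in $P=pre_{i+1}(s)$, so $s$ has a successor $s''$ with $s\arrowsuper{\tau}s''$ and $\pi_{i+1}(s'')=\pi_{i+1}(t')$; injectivity of $h_{i+1}$ gives $\pi_i(s'')=\pi_i(t')$ and $sig_{i+1}(s'')=sig_{i+1}(t')$. Using $\pi_i(s)=\pi_i(t)=\pi_i(t')$ one gets $\pi_i(s)=\pi_i(s'')$, and the passing condition for $t'$ transfers verbatim to $s''$ since the two have identical $(sig_{i+1},\pi_{i+1})$. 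Now Proposition~\ref{well-defined prop} guarantees that all passing successors of a single state share one signature, so $sig_{i+1}(s)=sig_{i+1}(s'')=sig_{i+1}(t')=sig_{i+1}(t)$; the symmetric argument (the hypotheses are symmetric in $s,t$) shows $s$ has a passing successor iff $t$ does, closing every case.

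I expect the main obstacle to be exactly this both-states-split case. One cannot compare the successors $s'$ and $t'$ actually selected for $s$ and $t$ directly, because equality of $pre_{i+1}$ does not propagate to the successors. The argument only goes through by reconstructing a matching successor $s''$ from the shared $\tau$-entry of $P$ and then appealing to well-definedness (Proposition~\ref{well-defined prop}) to identify $s''$ with whatever successor $s$ actually selects; pinning down this dependence on Proposition~\ref{well-defined prop} and on the injectivity of $h_{i+1}$ is the delicate point.
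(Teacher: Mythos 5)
Your proof is correct and follows essentially the same route as the paper's: part 1 by splitting on the two branches of the definition of $sig_{i+1}$ (including the observation that $\pi_{i+1}(s)=\pi_{i+1}(t')$ for the chosen silent successor $t'$, which the paper compresses into ``trivial''), and part 2 by transporting a passing witness from one state to the other through the pair $(\tau,\pi_{i+1}(\cdot))$ contained in the common pre-signature, using injectivity of $h_{i+1}$ and Proposition~\ref{well-defined prop} to identify signatures, with the remaining case closed by symmetry. The one substantive point where you go beyond the paper is the hypothesis $\pi_i(s)=\pi_i(t)$, and you are right that it is genuinely needed: take $s \arrowsuper{\tau} u$ and $t \arrowsuper{\tau} v$ with $u,v$ deadlocked, and an initial partition placing $s,u,v$ in one block and $t$ in another; then $pre_1(s)=pre_1(t)$, yet $sig_1(s)=sig_1(u)=\emptyset$ while $t$ fails the silence test, so $sig_1(t)=pre_1(t)\neq\emptyset$. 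The paper's own proof uses this hypothesis silently: in its step ``Finally, it follows that $sig_{i+1}(t)=sig_{i+1}(s)$'' one must verify $\pi_i(t)=\pi_i(t')$, which requires exactly $\pi_i(t)=\pi_i(s)$. As you note, the hypothesis is available at the only point of application (the canonical case of Lemma~\ref{preservation lemma}, where $s\bisim t$ and ${\bisim}\subseteq{\bisim_i}$), so your repaired statement supports the rest of the development unchanged.
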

\begin{proof}
\begin{pdmc}
By distinguishing cases depending on which branch was taken in the if-then-else of the definition of inductive signature.
\end{pdmc}
\begin{report}
Both parts are proven by case distinction.
\begin{description}
\item[$\exists s' : s \arrowsuper{\tau} s' \wedge pre_{i+1}(s) \subseteq sig_{i+1}(s')\cup \{ ( \tau , \pi_{i+1}(s') )$]
By definition we have $sig_{i+1}(s)=sig_{i+1}(s')$. Thus, part 1 is trivial.
\\
It also means that $(\tau,\pi_{i+1}(s')) \in pre_{i+1}(s)$.
So $(\tau,\pi_{i+1}(s') \in pre_{i+1}(t)$. So for some $t'$, we have
$t \arrowsuper{\tau} t'$ and $\pi_{i+1}(t')=pi_{i+1}(s')$.
This implies that $sig_{i+1}(t')=sig_{i+1}(s')$. Finally, it follows that
$sig_{i+1}(t)=sig_{i+1}(s)$.
\item[otherwise]
By definition we have $sig_{i+1}(s)=pre_{i+1}(s)$. Thus, part 1 is trivial.
\\
Due to symmetry we have have that $sig_{i+1}(t)=pre_{i+1}(t)$ as well.
\end{description}
\end{report}
\end{proof}

\noindent\begin{proof}[Proof of Lemma \ref{preservation lemma}.]
\begin{pdmc}
By induction on the order $(s,t)\geq(s',t')$ iff $s\geq s'\wedge t\geq t'$.

Because any transition in $s$ is either matched by a transition
of $t$, or it is a silent $\tau$ step, we have
\\[0.2em]
\centerline{\(
pre_{i+1}(s) \subseteq pre_{i+1}(t) \cup \{ ( \tau , \pi_{i+1}(t) ) \}
\)}
\\[0.2em]
Now, we distinguish on whether $s$ is canonical or not.
\begin{itemize}
\item $\canonical{s}$: In this case $pre_{i+1}(s)=pre_{i+1}(t)$,
due to the fact that bisimilar canonical states have the same transitions.
This implies $sig_{i+1}(s)=sig_{i+1}(t)$ and thus $s \bisim_{i+1} t$.
\item $s\arrowsuper{\tau}s' \wedge s \bisim s'$:
By induction hypothesis $sig_{i+1}(s')=sig_{i+1}(t)$. Thus
\\[0.2em]
\centerline{\(
pre_{i+1}(s) \subseteq pre_{i+1}(t) \cup \{ ( \tau , \pi_{i+1}(t) ) \}
\subseteq sig_{i+1}(t) \cup \{ ( \tau , \pi_{i+1}(t) ) \}
= sig_{i+1}(s') \cup \{ ( \tau , \pi_{i+1}(s') ) \}
\)}
\\[0.2em]
and therefore $sig_{i+1}(s)=sig_{i+1}(s')$.
\end{itemize}
\end{pdmc}
\begin{report}
By induction on the order $\geq$ on pairs of states, defined as 
$(s,t)\geq(s',t')$ iff $s\geq s'\wedge t\geq t'$.

First, we prove that
\begin{equation}\label{subset claim}
pre_{i+1}(s) \subseteq pre_{i+1}(t) \cup \{ ( \tau , \pi_{i+1}(t) ) \}
\end{equation}
The elements of $pre_{i+1}(s)$ fit one of two cases:
\begin{itemize}
\item $(a,\pi_i(s'))$, for $s \arrowsuper{a} s' \wedge a \in A_?$:
Because $s \bisim t$ and $a\neq \tau$, we have $t \arrowsuper{\tau}^* t'' \arrowsuper{a} t'$ with $
s\bisim t'' \wedge s' \bisim t'$. Because $\canonical{t}$, we have $t''=t$.
By assumption $s' \bisim_i t'$ and thus $(a,\pi_i(s'))\in pre_{i+1}(t)$.
\item $(a,\pi_{i+1}(s'))$, for $s \arrowsuper{a} s' \wedge a \in A_{>}$:
We have three sub-cases:
\begin{itemize}
\item $t \arrowsuper{a} t'$ with $s' \bisim t'$:
By induction hypothesis $s' \bisim_{i+1} t'$ and thus $(a,h(sig_{i+1}(s'))\in pre_{i+1}(t)$.
\item $t \arrowsuper{\tau}^+ t'' \arrowsuper{a} t'$ with $s \bisim t'' \wedge s' \bisim t'$:
Impossible due to $\canonical{t}$.
\item $a=\tau$ and $s' \bisim t$:
By induction hypothesis $s' \bisim_{i+1} t$, so $(a,\pi_{i+1}(s'))=( \tau , \pi_{i+1}(t) )$.
\end{itemize}
\end{itemize}
This completes the proof of (\ref{subset claim}). Now, we distinguish on
whether $s$ is canonical or not.
\begin{itemize}
\item $\canonical{s}$: In this case we claim $pre_{i+1}(s)=pre_{i+1}(t)$.
Each of the inclusion is proven similar to the proof of (\ref{subset claim}) above.
This implies $sig_{i+1}(s)=sig_{i+1}(t)$ and thus $s \bisim_{i+1} t$.
\item $s\arrowsuper{\tau}s' \wedge s \bisim s'$:
By induction hypothesis $sig_{i+1}(s')=sig_{i+1}(t)$. Thus
\[
pre_{i+1}(s) \subseteq pre_{i+1}(t) \cup \{ ( \tau , \pi_{i+1}(t) ) \}
\subseteq sig_{i+1}(t) \cup \{ ( \tau , \pi_{i+1}(t) ) \}
= sig_{i+1}(s') \cup \{ ( \tau , \pi_{i+1}(s') ) \}
\]
Thus $sig_{i+1}(s)=sig_{i+1}(s')$.
\end{itemize}
\end{report}
\end{proof}

\begin{lemma}\label{bisimulation lemma}
If for all $s,t$: $s \bisim_i t \Leftrightarrow s \bisim_{i+1} t$ then
$\bisim_i$ is a branching bisimulation.
\end{lemma}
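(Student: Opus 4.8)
The plan is to verify the defining clause of branching bisimulation for $\bisim_i$ directly. Symmetry is free, since $\bisim_i$ is the equivalence relation induced by $\pi_i$; so I would fix $s \bisim_i t$ and a transition $s \arrowsuper{a} s'$ and produce one of the two disjuncts. I would first dispose of the genuinely silent case: if $a \equiv \tau$ and $s' \bisim_i s$, then transitivity of $\bisim_i$ together with $s \bisim_i t$ gives $s' \bisim_i t$, which is the first disjunct. Everything else --- either $a \neq \tau$, or $a \equiv \tau$ with $\pi_i(s') \neq \pi_i(s)$ --- falls under the ``interesting'' case, which I would handle through the signatures.

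In the interesting case the transition $s \arrowsuper{a} s'$ puts the pair $(a,\pi_{i+\hat{a}}(s'))$ into $pre_{i+1}(s)$, and the first thing to establish is that this pair actually lies in $sig_{i+1}(s)$. By part~1 of Proposition~\ref{two observations prop} the only element of $pre_{i+1}(s)$ that can be missing from $sig_{i+1}(s)$ is $(\tau,\pi_{i+1}(s))$, so it is enough to check that our pair differs from it. When $a \neq \tau$ this is immediate; when $a \equiv \tau$ I would use the stability hypothesis to transport $\pi_i(s') \neq \pi_i(s)$ into $\pi_{i+1}(s') \neq \pi_{i+1}(s)$, so that $(\tau,\pi_{i+1}(s')) \neq (\tau,\pi_{i+1}(s))$. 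Hence $(a,\pi_{i+\hat{a}}(s')) \in sig_{i+1}(s)$.

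Next I would move this pair over to $t$. Since $\pi_{i+1}(s) = h_{i+1}(\pi_i(s),sig_{i+1}(s))$ and $h_{i+1}$ is a bijection, $s \bisim_{i+1} t$ is equivalent to the conjunction $\pi_i(s) = \pi_i(t) \wedge sig_{i+1}(s) = sig_{i+1}(t)$; as $s \bisim_i t$ already gives $\pi_i(s) = \pi_i(t)$, stability upgrades it to $s \bisim_{i+1} t$ and therefore to $sig_{i+1}(s) = sig_{i+1}(t)$. Thus $(a,\pi_{i+\hat{a}}(s')) \in sig_{i+1}(t)$, and part~2 of Proposition~\ref{well-defined prop} supplies a path $t \arrowsuper{\tau} t_1 \arrowsuper{\tau} \cdots \arrowsuper{\tau} t_m \arrowsuper{a} t''$ with $\pi_i(t) = \pi_i(t_j)$ for every $j$ and $\pi_{i+\hat{a}}(s') = \pi_{i+\hat{a}}(t'')$. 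Taking $t'$ to be $t_m$ (or $t$ itself when $m = 0$) I get $t \arrowsuper{\tau}^* t'$, and $s \bisim_i t'$ because $\pi_i(t') = \pi_i(t) = \pi_i(s)$.

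It remains to check $s' \bisim_i t''$: if $a \in A_?$ this is exactly $\pi_i(s') = \pi_i(t'')$, and if $a \in A_>$ it reads $\pi_{i+1}(s') = \pi_{i+1}(t'')$, i.e.\ $s' \bisim_{i+1} t''$, which stability turns back into $s' \bisim_i t''$. Together with $t \arrowsuper{\tau}^* t'$, $s \bisim_i t'$ and $t' \arrowsuper{a} t''$ this is the second disjunct, finishing the proof. I expect the only delicate point to be the bookkeeping hidden in the $\hat{a}$ notation between $\pi_i$- and $\pi_{i+1}$-values: the stability hypothesis is used precisely twice for it, first to keep the contributed pair out of the excluded singleton $\{(\tau,\pi_{i+1}(s))\}$, and then to read the $\pi_{i+\hat{a}}$-equalities coming out of Proposition~\ref{well-defined prop} back as genuine $\bisim_i$-equivalences.
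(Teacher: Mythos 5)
Your proposal is correct and follows essentially the same route as the paper: the same case split (silent $\tau$-step versus the rest), the same key observation that $(a,\pi_{i+\hat{a}}(s'))\in sig_{i+1}(s)=sig_{i+1}(t)$ via stability and injectivity of $h_{i+1}$, and then Proposition~\ref{well-defined prop}, part~2, to extract the matching path from $t$ --- exactly the ``corollary'' the paper invokes, and which its full proof unfolds inline by repeating the case distinction. Your write-up merely makes explicit some details the paper leaves implicit (the role of Proposition~\ref{two observations prop}, part~1, and of the bijectivity of $h_{i+1}$).
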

\begin{proof}
\begin{pdmc}
Corollary of Prop.\ref{well-defined prop}, part 2.
\end{pdmc}
\begin{report}
Suppose that $s \bisim_i t$ and $s \arrowsuper{a} s'$.
\\
We distinguish two cases:
\begin{itemize}
\item $a=\tau \wedge s \bisim_i s'$: This implies $s' \bisim_i t$.
\item $a\neq\tau \vee s \not{\bisim}_i s'$: This implies
$(a,\pi_{i+\hat{a}}(s'))\in sig_{i+1}(s)$. So $(a,\pi_{i+\hat{a}}(s'))\in sig_{i+1}(t)$.
If $(a,\pi_{i+\hat{a}}(s'))\in pre_{i+1}(t)$ then $t \arrowsuper{a} t'$ with $s' \bisim_i t'$.
Otherwise, there must be a $t^\circ$, such that $t \arrowsuper{a} t^\circ$
and $sig_{i+1}(t)=sig_{i+1}(t^\circ)$. By repeating
the case distinction we can construct $t \arrowsuper{\tau}^* t'' \arrowsuper{a} t'$ with
$s\bisim_i t'' \wedge s' \bisim_i t'$.
\end{itemize}
\end{report}
\end{proof}

\section{Distributed Algorithm}
\label{algorithm section}

In this section, we present a distributed algorithm for computing the branching bisimulation
equivalence relation.

\medskip

The input to the algorithm is an LTS $(S,\to,s^0)$, a well
founded partition $\langle A_? , A_> \rangle$,
and a function $owner : S \to \{ 1 , \cdots , W \}$ where $W$ is the number of workers.
The owner function is a given distribution of states among the workers.

The given isomorphisms of the theory are replaced by global hash tables in the implementation.
Each worker stores an equal part of this global hash table.The worker where the (new) ID of the pair
(oldID,signature) is stored is given by the second owner function
$owner : ID\times Sig \to \{1,\cdots,W\}$.

In the actual implementation states and edges are numbered entities. Since the theory
assumes that edges are triples, we need to introduce some new notation. Moreover, we have to
distinguish which worker owns which state and which edge, so we need some notation for
that as well.

\smallskip
The functions $src,dst$ and $lbl$  provide access to the source state, destination state and label
of an edge, respectively:
\\[0.2em]
\centerline{\(
\forall e\equiv(s,a,t) \in {\to} :
src(e)=s
\mbox{, }
lbl(e)=a
\mbox{ and }
dst(e)=t
\enspace .
\)}
\\[0.2em]
Each worker owns a set of states and needs to know the outgoing $\tau$ edges, $A_?$ edges and
$A_>$ edges:
\\[0.2em]
\centerline{\(
\begin{array}{lll@{~~~~~~~~~~~~~~}lll}
S_w &=& \{ s \in S \mid owner(s) = w \}
&
E^\tau_w &=& \{ e \in {\to} \mid src(e) \in S_w \wedge lbl(e)=\tau \}
\\
E_w^? &=& \{ e \in {\to} \mid src(e) \in S_w \wedge lbl(e)\in A_?\}
&
E_w^> &=& \{ e \in {\to} \mid src(e) \in S_w \wedge lbl(e)\in A_> \}
\end{array}
\)}
\\[0.2em]
Finally, we need the definitions of successor
and predecessor edges of a state:
\\[0.2em]
\centerline{\(
succ(s)=\{ e \mid src(e)=s \}
~~~~~~~~~~~~~~~~
pred(s)=\{ e \mid dst(e)=s \}
\)}
\\[0.2em]
Each worker stores both ingoing and outgoing edges of the states it owns in a way that allows it
to quickly enumerate the successors and predecessors of every state. 

\begin{table}[tp]\caption{Pseudo code for worker $w$ (inductive branching bisimulation reduction)}
%\rule{\textwidth}{2pt}
\label{refinement listing}
\hspace*{1.5em}
\begin{minipage}{\textwidth}
\begin{lstlisting}[numbers=left,language=pseudo]
set sig[$S_w$], dest_sig[$E_w^\tau$], old_queue, sig_queue, new_queue
int old_id[$S_w$], current_id[$S_w$], dst_old[$E_w^?\cup E_w^\tau$], dst_new[$E_w^>$]
proc reduce()
   int old_count:=0 , new_count:=1 (@ \label{main init begin} @)
   for (@$t \in S_w $@) do current_id[t]:=0 end (@ \label{main init end} @)
   while old_count$\neq$new_count do
      old_count:=new_count;  indexed_set_clear() (@ \label{loop init begin} @)
      for (@$t \in S_w $@) do old_id[$t$]:=current_id[$t$]; current_id[$t$]:=$\perp$ end
      for $e$ in $E_w^? $ do dst_old[$e$]:=$\perp$ end ; for $e$ in $E_w^>$ do dst_new[$e$]:=$\perp$ end
      for $e$ in $E^\tau_w$ do dst_sig[$e$]:=$\perp$ ; dst_old[$e$]:=$\perp$ end
      old_queue := $S_w$; sig_queue:= $\{ s \in S_w \mid \neg\exists a,t:~s \arrowsuper{a} t\}$ ; new_queue := $\emptyset$ (@ \label{loop init end} @)
      do
      :: take $s$ from old_queue => (@ \label{old send begin} @)
            for $e$ in $pred(s)$ with $lbl(e)\in Act_? \cup \{ \tau \}$ do
               send set_old(e,old_id[$s$]) to owner($src(e)$) end (@ \label{old send end} @)
      :: recv set_old(e,id) => dst_old[e]:=id; check_ready($src(e)$) (@ \label{old recv begin} @) (@ \label{old recv end} @)
      :: take $s$ from sig_queue => (@ \label{sig begin} @)
            sig := compute_sig(s);
            for $e$ in $pred(s)$ with $lbl(e)=\tau$  do
               send set_sig(e,sig) to owner(src(e)) end
            send get_global($s$,old_id[s],sig) to owner(old_id[s],sig) (@ \label{sig end} @)
      :: recv set_sig(e,e_sig) => dest_sig[e] := e_sig; check_ready($src(e)$) (@ \label{sig recv begin} \label{sig recv end} @)
      :: recv get_global($s$,id_old,sig) => (@ \label{hash serve begin} @)
            send set_global(s,indexed_set_put(id_old,sig)) to owner(s) (@ \label{hash serve end} @)
      :: recv set_global(s,id) => current_id[s]:=id; add $s$ to new_queue (@ \label{hash recv begin}\label{hash recv end}  @)
      :: take $s$ from new_queue => (@ \label{new send begin}@)
            for $e$ in $pred(s)$ with $lbl(e)\in Act_>$  do
               send set_new($e$,current_id[$s$]) to $owner(src(e))$ end (@ \label{new send end}@)
      :: recv(set_new($e$,id)) => dst_new[$e$]:=id; check_ready($src(e)$) (@ \label{new recv begin}\label{new recv end} @)
      until $\forall s \in S:$ current_id[$s$] $\neq$ $\perp$
      new_count:=distributed_sum(index_count)
   end
end
\end{lstlisting}
\end{minipage}
%\rule{\textwidth}{2pt}
\end{table}

\newcommand{\seelisting}[2]{\ifthenelse{\equal{\ref{#2 begin}}{\ref{#2 end}}}
{(See table \ref{#1 listing}, line \ref{#2 begin}.)}
{(See table \ref{#1 listing}, lines \ref{#2 begin}-\ref{#2 end}.)}}

Next, we will explain our algorithm for distributed computation of inductive signatures.
Pseudo code of the main loop can be found in Table \ref{refinement listing}.
It leaves out the details of the signature computation and global hash table.
These details can be found in table \ref{subroutine listing}.
The algorithm works in a few steps:
\begin{enumerate}\setlength{\itemsep}{-0.2em}
\item Put the initial partition (every state is equivalent) in the current partition and start the first iteration. \seelisting{refinement}{main init}
\item Initialize the data structure needed in each iteration.
That is, set the values of the successor partition IDs and signatures to undefined,
clear the global hash table, clear the signature and new ID queues and put all states
in the old ID queue.
\seelisting{refinement}{loop init}
\item If a state is in the old ID queue it means that the ID with respect to the previous partition
has to be forwarded to the predecessors. This is done by sending a message for every incoming
$A_?$ or $\tau$ edge. \seelisting{refinement}{old send}
If such a message is received then the old ID is stored and if necessary the state is put in the signature queue. \seelisting{refinement}{old recv}.
\item If a state is in the signature queue then all information needed to compute the signature is present.
Once the signature has been computed it is sent to all $\tau$ predecessors and a request is sent to the global hash table to resolve the ID of the (oldID, signature) pair. \seelisting{refinement}{sig}
If a signature set request is received then the signature is set and if necessary the state is put in the signature queue. \seelisting{refinement}{sig recv}
If a hash table request is received then the lookup is made and the reply is sent immediately.
\seelisting{refinement}{hash serve}
Upon receiving the reply, the state is put in the new ID queue. \seelisting{refinement}{hash recv}
\item If a state is in the new ID queue then the ID in the current partition is ready
to be sent to all $A_>$ predecessors. \seelisting{refinement}{new send}
Receiving such a message leads to storing the result and possibly inserting the state
in the signature queue. \seelisting{refinement}{new recv}
\item 
As soon as the new partition ID of every state is known everywhere, the message loop can exit.
Note that this requires a simple form of distributed termination detection.
\item By adding up the share of every partition ID hash table, we compute the number of partitions
and we repeat the loop if necessary.
\end{enumerate}

\begin{table}[tp]\caption{Subroutines for inductive branching minimization.}
%\rule{\textwidth}{2pt}
\label{subroutine listing}
\hspace*{1.5em}
\begin{minipage}{\textwidth}
\begin{lstlisting}[numbers=left,language=pseudo]
proc check_ready($s$) 
   for $e$ in $succ(s)$ do
      if dest_id[$e$]=$\perp$ or $lbl(e)=\tau$ $\wedge$ dest_sig[$e$]=$\perp$ then return end
   end
   add $s$ to sig_queue
end
set compute_sig($s$)
    pre := $\emptyset$
    for $e$ in $succ(s) \cap E_w^{?}$ do pre := pre $\cup$ {($lbl(e)$,dst_old[$e$])} end
    for $e$ in $succ(s) \cap E_w^{>}$ do pre := pre $\cup$ {($lbl(e)$,dst_new[$e$])} end
    for $e$ in $succ(s)$ with $lbl(e)=\tau$ and dest_id[$s$] = dst_old[$e$] do
	if pre $\subseteq$ dest_sig[$e$]$\cup${($\tau$,dst_new[$e$])} then return dest_sig[e] end
    end
    return pre
end
int  index_count:=0;  hashtable index_table:=$\emptyset$
proc indexed_set_clear() index_count:=0; index_table:=$\emptyset$ end
int indexed_set_put(pair)
   if index_table[pair] = $\perp$ then
      index_table[pair]:=index_count*workers+me; index_count++ end
   return index_table[pair]
end
\end{lstlisting}
\end{minipage}
%\rule{\textwidth}{2pt}
\end{table}

As described above, messages from the old queue, signature queue and new queue are dealt with in parallel
until finished. The actual implementation deals with these messages in waves:
first the entire old queue is dealt with then the signature queue and new queue are emptied globally in
sub iterations.

\smallskip

Before we discuss the experiments with our prototype implementation, we first discuss the
time, memory and message complexity. For this analysis we assume that the fan out of every state
is bounded. We assume an LTS with $N$ states and $M$ transitions.

The time needed for the algorithm is the number of iterations times the cost of each iteration.
The worst case number of iterations is the number of states $N$. (E.g. for the LTS
$( \{ 0 , \cdots , N-1 \} , i \arrowsuper{a} i+1 \mathrel{\rm mod} N \cup 0 \arrowsuper{b} 0 , 0 )$.)
In each iteration, for each state we must compute the signature and insert it in
the global hash table. Due to the fact that the fan out is constant, this requires
$\mathcal{O}(N)$ time and messages. For each edge, we may have to send the old ID,
the new ID and the signature. This requires $\mathcal{O}(M)$ time and messages.
Overall, the worst case time complexity is $\mathcal{O}(N \cdot {N+M})$.

The number of times one cannot avoid waiting for a message in each iteration depends
on the length of the longest $A_>$ path in the graph: computation has to start
at the last node and work up to the first, incurring three message latencies at each
step.

The memory needed by the algorithm to store the LTS and the signatures
is linear in the number of states and transitions: $\mathcal{O}(N+M)$.
(This is a difference to the old algorithm where even if the fan out was bounded,
the size of many signatures could be in the order of the number of edges.)
Provided that the owner functions work well, the memory use is evenly distributed across all workers.
The memory needed for message buffering can be kept constant, because each step that involves
sending more than one message is a step where a state has to be taken from a queue.
Blocking these steps if the number of messages in the system is above a threshold
limits the number of messages to that threshold.
Overall, the worst case memory complexity of the algorithm is $\mathcal{O}({N+M})$.

The worst case memory is also the expected memory complexity, since we expect to keep the LTS
in memory. The expected time complexity is much lower than the worst case:
The expected number of iterations and the expected length of the longest $A_>$ path
are orders of magnitude less than the number of states.

%Experiments have shown that
%the expected number of iterations is linear in the number of processes modeled.
%Due to state space explosion the number of states is typically exponential in the number of processes,
%which leads to the expected number of iterations being the logarithm of the number of states.

\section{Experimental Evaluation}
\label{experiments}

To study the performance of the implementation of the new algorithm,
we use four models. We perform two tests on these models.
First, we compare with
existing branching bisimulation reduction tools. Second, we test how well
the new implementation scales in the number of computes nodes and cores used per node.
In addition, we briefly mention work in progress on inductive strong bisimulation.

\smallskip

The models that we use in our experiments are:
\begin{description}\setlength{\itemsep}{-0.2em}
%\item[ccp33] Describes an instance of the cache coherence
%protocol \emph{Jackal} for Java programs with 3 processes and 3
%threads \cite{PFHV06}.
\item[lift6] A distributed lift system \cite{Groote2001a}.
This model describes a system that can lift large vehicles by using one leg for each wheel
of the vehicle. These legs are connected in a ring topology.
The instance we used has 6 legs.
\item[swp6] A version of the sliding window protocol \cite{DBLP:journals/fac/BadbanFGPP05}.
It has 2 data elements, the channels can contain at most one element and 
the window size is 6.
\item[fr53] A model of Franklin's leader election protocol for
anonymous processes along a bidirectional ring of asynchronous channels, which
terminates with probability one \cite{DBLP:conf/ifipTCS/BakhshiFPP08,DBLP:journals/cacm/Franklin82}.
We chose an instance with 5 nodes and 3 identities.
\item[1394fin] Model of the physical layer service
of the 1394 or firewire protocol and also the link
layer protocol entities \cite{SEN-R9706,DBLP:journals/sttt/SighireanuM98}.
We use an instance with 3 links and 1 data element.
\end{description}
The sizes of these models, in their original, cycle eliminated and branching reduced forms are shown in Table \ref{problem size table}. This table also show the number of iterations needed by classic branching (c.b.), inductive branching (i.b.), classic  strong (c.s.), inductive strong (i.s.) and
the length of the longest $\tau$ path (p).
Note that in two cases (lift6 and 1394fin) the number of iterations
needed by the inductive branching algorithm is less than the number needed by the classical algorithm.
Also note that the number of iterations needed for inductive strong bisimulation
is always a lot less. It will be interesting to see, if
we get similar results if we use real input graphs and $A_>$, instead of $\tau$-cycle reduced graphs
and $A_>=\{\tau\}$.

\begin{table}[tp]\caption{Problem sizes}\label{problem size table}
\resizebox{\textwidth}{!}{
\begin{tabular}{|l|r|r|r|r|r|r|r|r|r|r|r|}
\hline
& \multicolumn{2}{c|}{original}
& \multicolumn{2}{c|}{cycle free}
& \multicolumn{2}{c|}{branching}
& \multicolumn{5}{c|}{iterations}
\\\hline
 & states & trans. & states & trans. & states & trans. & c.b. & i.b. & c.s. & i.s. & p
\\\hline
%lift5& 	2,165,446	& 8,723,465& 	216,4626& 	8,722,065& 	4,256	& 20,880	& 14& 	7
%\\
lift6	& 33,949,609	& 165,318,222	& 33,946,699& 	165,312,102	& 12,463	& 71,466	& 16& 	8 & 91 & 7 & 78
\\\hline
%swp4& 	1,132,248	& 5,195,904	& 286,104	& 1,136,240	& 511	& 1,020& 	9& 	9
%\\
%swp5& 	8,487,750	& 39,883,780& 	2,084,870& 	8,555,160	& 2,047	& 4,092& 	11& 	11
%\\
swp6& 	56,793,060& 	271,366,320& 	13,606,212& 	56,996,856& 	8,191	& 16,380& 	13	& 13 & 20 & 13 & 51
\\\hline
1394fin& 	88,221,818& 	152,948,696	& 86,692,394& 	148,537,294	& 26,264	& 79,002	& 7	& 5 & 91& 6 & 75
\\\hline
fr53& 	84,381,157& 	401,681,445& 	81,115,587& 	385,379,715& 	2& 	1& 	2& 	2 &
 - & - & 196
%\\\hline
%ccp33tau& 	94,599,967	& 969,490,872	& 86,779,951	& 885,240,154& 	64& 	880	& 2& 	2
\\\hline
\end{tabular}}
\end{table}

In Table \ref{comparison table}, we show the results of the comparison.
The tools in the comparison are
\begin{description}\setlength{\itemsep}{-0.2em}
\item[{bcg\underline{~}min}]
The reduction tool from the CADP toolset \cite{DBLP:conf/cav/GaravelMLS07}.
Version 1.7 from the 2007q beta release, 64 bit installation. This implements
the algorithm from \cite{DBLP:conf/icalp/GrooteV90}, for which first the $\tau$-cycles
must be eliminated (ce).
\item[ltsmin sequential]
The reduction tool which is released as part of the $\mu$CRL toolset
\cite{DBLP:conf/cav/BlomFGLLP01}. We additionally implemented a sequential version
of the inductive branching bisimulation algorithm in this tool.
\item[ltsmin distributed]
A distributed implementation, which contains the classic distributed branching 
bisimulation reduction algorithm from \cite{DBLP:journals/entcs/BlomO03},
and the newly implemented inductive branching bisimulation reduction algorithm.
\end{description}

For {\tt bcg\underline{~}min}, we show the total time needed for
reading the input, reducing and writing the output.
For ltsmin sequential, we show both the total time and the time needed for reduction.
For ltsmin distributed classic, we show the reduction time (wall clock time).
For ltsmin distributed inductive, we show the time for sequential cycle elimination
and the wall clock time of distributed reduction. In all cases we additionally show the total memory requirements in MB.
The tests were performed on a dual quad core Xeon 3GHz
machine with 48GB memory.

\begin{table}[tp]\caption{Sequential tool comparison.}\label{comparison table}
\resizebox{\textwidth}{!}{
\begin{tabular}{|l|r|r|r|r|r|r|r|r|r|r|r|r|r|r@{$+$}r|r|}
\hline
& \multicolumn{2}{c|}{bcg\underline{~}min}
& \multicolumn{9}{c|}{ltsmin (sequential implementation)}
& \multicolumn{5}{c|}{ltsmin (distributed, 4 cores)}
\\\hline
& \multicolumn{2}{c|}{ce + GV \cite{DBLP:conf/icalp/GrooteV90}}
& \multicolumn{3}{c|}{classic}
& \multicolumn{3}{c|}{ce + classic}
& \multicolumn{3}{c|}{ce + inductive}
& \multicolumn{2}{c|}{classic}
& \multicolumn{3}{c|}{ce + inductive}
\\\hline
 & time & mem & time & red & mem& time & red & mem & time & red & mem & red & mem  & \multicolumn{2}{c|}{red} & mem 
\\\hline\hline
lift6 & 1251	&6493& 261&225&2939&	298 & 261 &	2203	&191 & 154 &2299& 655 &7116	& 64 & 246&	5520
\\\hline
swp6 &	1298&	10699&342&287&5464&	264& 209 &	3625	&166&111&3573	&621&	12129&	73&133&	3587
\\\hline
1394 &		20906&	8226& 248&218&3473&	231 & 201	&2482&	144 & 114&2724	&730	&8657&62 & 272&	6315
\\\hline
fr53 &		204	&15870& 305	& 237 & 9744 & 1247 & 1180 &	5377	&715 & 651 &	5462	& 188&	16871	&624 & 476	&12991
\\\hline
\end{tabular}}
\end{table}

\smallskip

Several conclusions can be drawn from the results.
By looking at the results for sequential ltsmin, we can conclude that inductive signatures
are better than classic signatures. By looking at the times needed for fr53
it is obvious that this implementation of cycle elimination in ltsmin should
be improved.

We can also conclude that on these cases, sequential ltsmin uses much less memory than bcg\underline{~}min for branching bisimulation.
With the exception of fr53, sequential ltsmin is also much faster than bcg\underline{~}min.
Note that the differences in time/memory are partially due to differences in implementation.
For instance, bcgmin uses 64 bit pointers to represent partitions, whereas ltsmin uses 32 bit integers.

It is also clear that the distributed tool is much more expensive in time
and memory than the sequential tool. The extra cost in memory is easily explained.
In ltsmin, signature ID's are stored per state only. In ltsmin they have to be stored
per state and per transition. In ltsmin the LTS itself takes 4 bytes per state
and 8 bytes per transition (label and state). In ltsmin it takes 8 bytes per state and
24 bytes per transition (label, owner and state for ingoing and outgoing edges).
This mean that ltsmin has to work through roughly 3 times as much data in each iteration,
which might take up to 3 times as much time.
Frequent synchronization between the workers and having to send and receive information
that in ltsmin can simply be accessed is expected to account for a lot of time.

\begin{figure}[tp]
\begin{center}
\begin{tabular}{cc}
{\small swp 6} & {\small lift 6}
\\
\parbox{0.5\textwidth}{
\vspace*{-2em}
\hspace*{-2em}\scalebox{.28}{\includegraphics{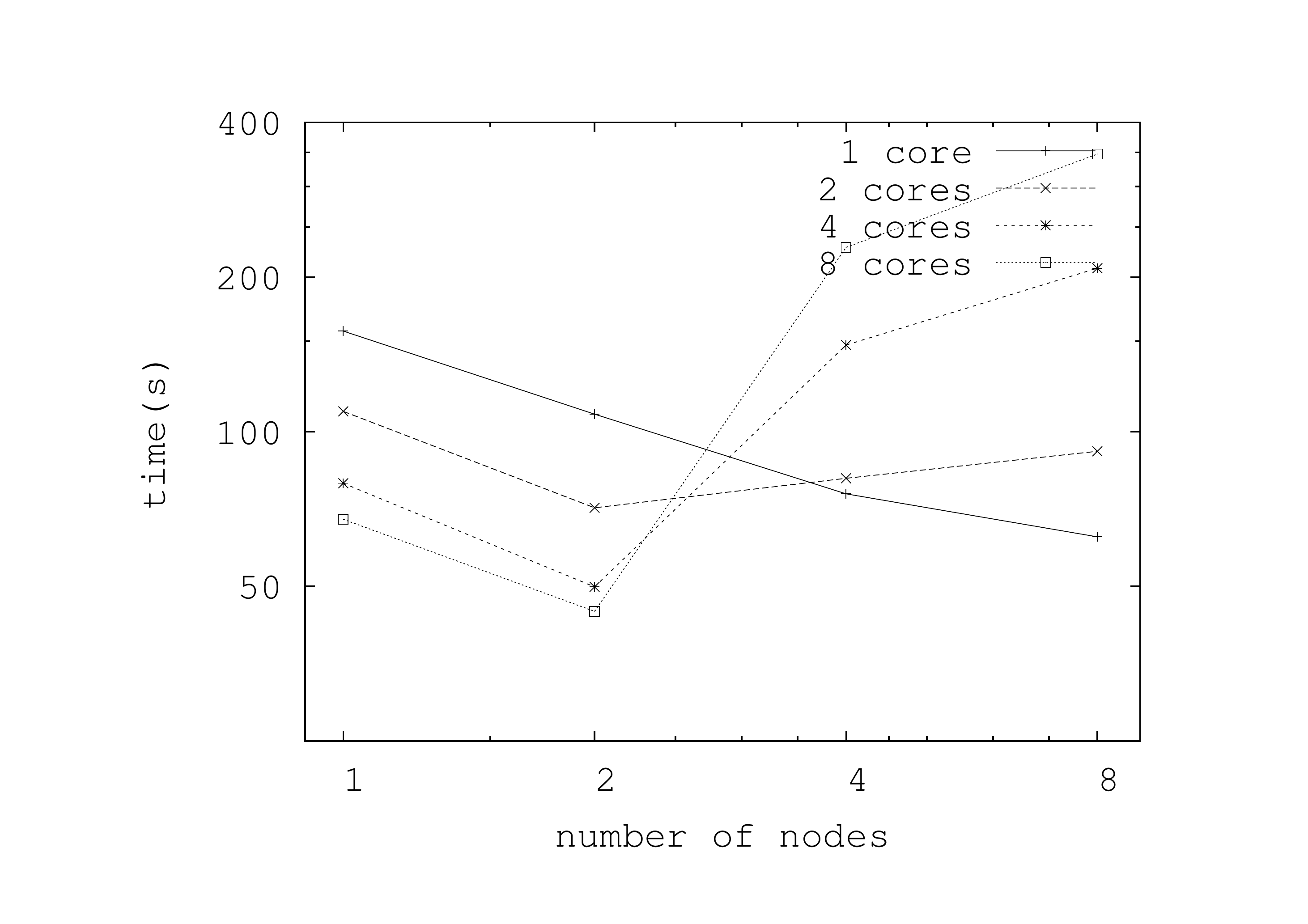}}\hspace*{-2em}
\vspace*{-0.7em}
}
&
\parbox{0.5\textwidth}{
\vspace*{-2em}
\hspace*{-2em}\scalebox{.28}{\includegraphics{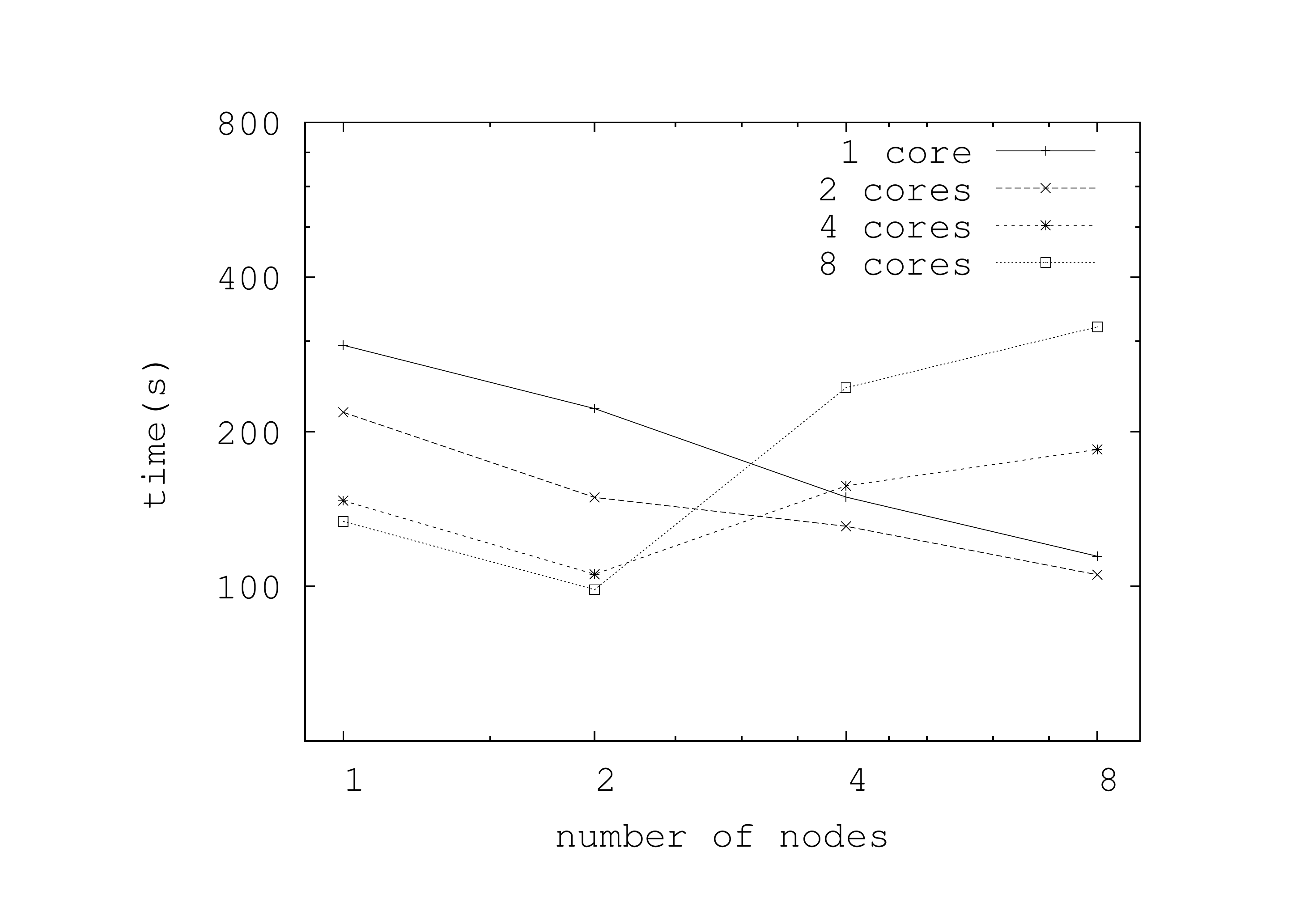}}\hspace*{-2em}
\vspace*{-0.7em}
}
\\
{\small 1394fin} & {\small franklin 5/3}
\\
\parbox{0.5\textwidth}{
\vspace*{-2em}
\hspace*{-2em}\scalebox{.28}{\includegraphics{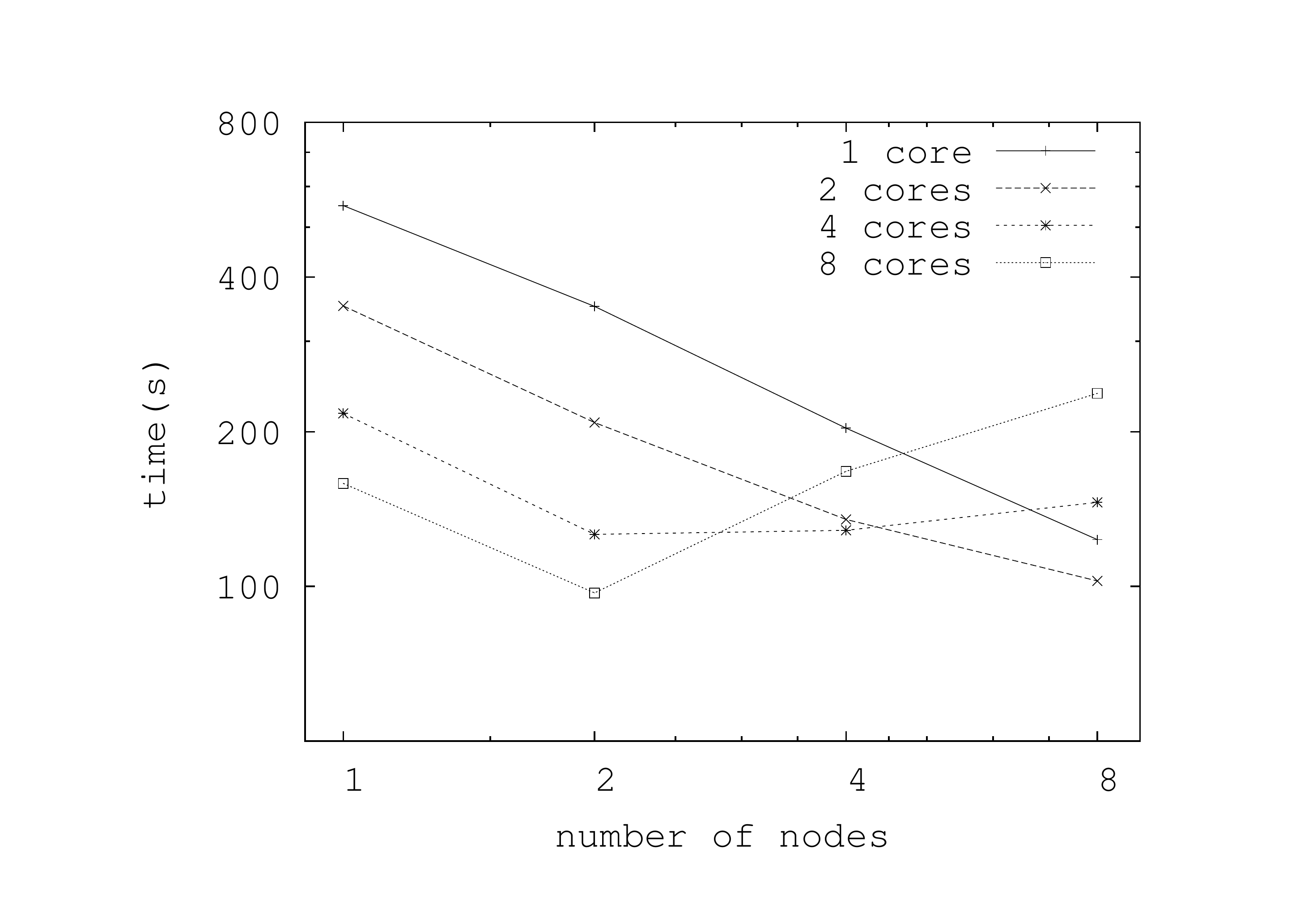}}\hspace*{-2em}
\vspace*{-1.5em}
}
&
\parbox{0.5\textwidth}{
\vspace*{-2em}
\hspace*{-2em}\scalebox{.28}{\includegraphics{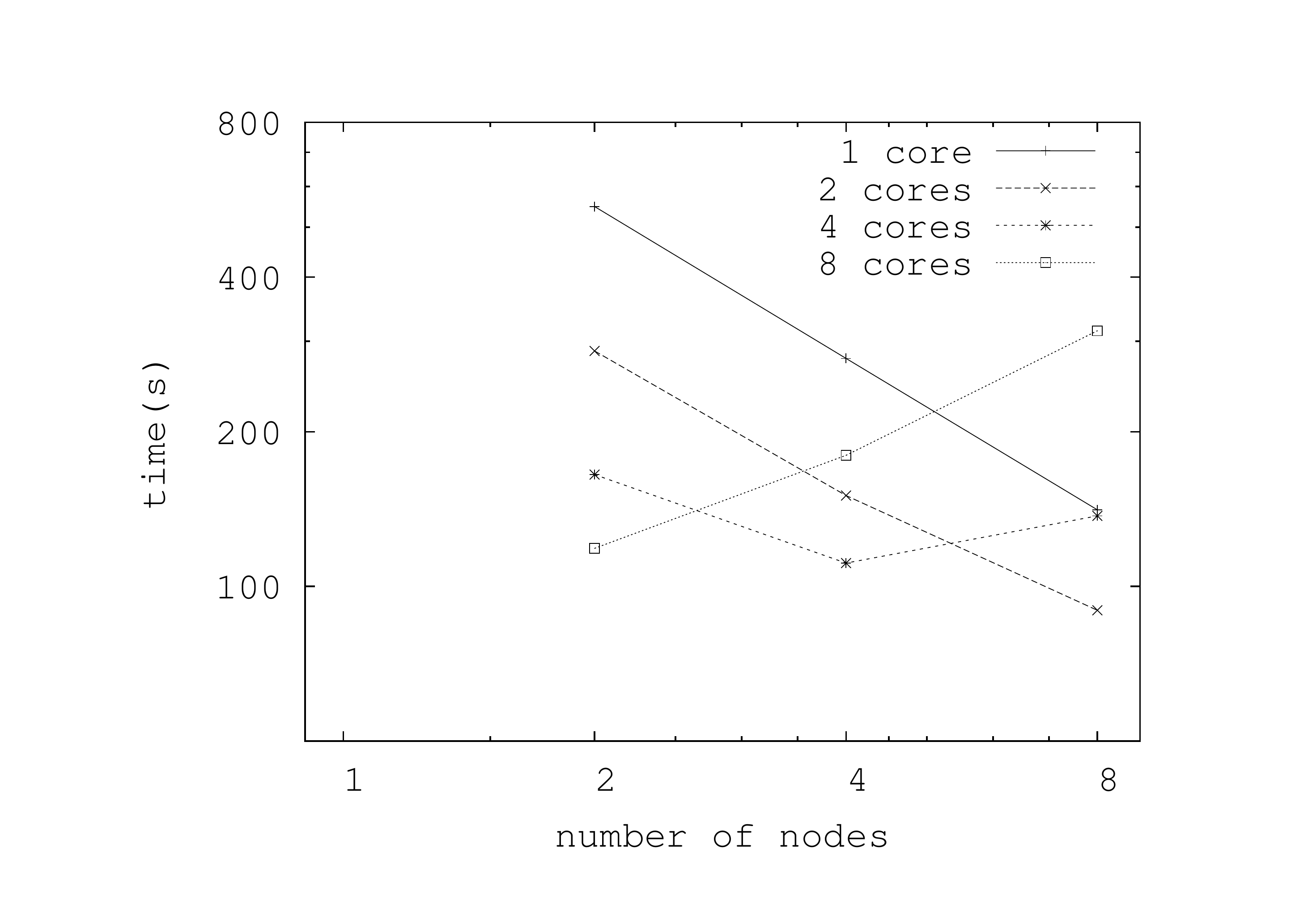}}\hspace*{-2em}
\vspace*{-1.5em}
}
\end{tabular}
\end{center}
\caption{Distributed reduction times for inductive branching bisimulation}
\label{reduction times figure}
\end{figure}

\smallskip

To test how well the algorithms scale, we first eliminated the $\tau$ cycles from
the four examples and then ran the inductive reduction on 1, 2, 4 and 8 nodes
with 1, 2 ,4 and 8 cores per node. For these tests, we used a cluster with
dual quad core Xeon 2GHz, 8GB memory machines connected with gigabit ethernet.
The times needed for the reduction can be seen in Fig. \ref{reduction times figure}.

The graphs have been ordered from the smallest to the largest problem.
It is interesting to see that for the smallest problem (swp6), the first time that 
more workers leads to more rather than less time is
using 2 nodes, 2 cores per node. For the next two (lift6,1394fin) this happens at 2 nodes,
4 cores per node and for the largest (franklin) at
4 nodes, 4 cores per node.

It is also clear that using 8 cores instead of 4 is problematic.
For 1 and 2 nodes the performance increase is small and for 4 and 8 nodes,
the performance actually gets worse.
Taken together with the huge difference in performance between the
sequential and the distributed tool this leads to the (unsurprising)
conclusion that it would be better to change the implementation to be aware
of which workers are local (allow shared memory) and which workers are remote
(require message passing). We leave such a tuned heterogeneous cluster-of-multi-cores
implementation for future work.

\section{Conclusion}
\label{conclusion section}

We have defined the notion of inductive branching signature
and proven that it corresponds to branching bisimulation.
We have given a distributed algorithm that computes
the coarsest branching bisimulation using inductive signatures.
In the experiments section, we have shown that it
is possible to implement the algorithm in such a way
that it scales for up to 8 workers with 1 or 2 cores.

The current prototype is good enough to show the merit of the concept
of inductive signatures. However, it can be optimized in several ways.
For example, the information about edges between two workers is currently stored
by both the source worker and the destination worker. If both workers are on the same
machine, then they could share a single instance of the data. Similarly,
the algorithm uses a lot of small messages. For good performance,
message combining is needed, which is currently done at the worker level, but could
be done at the node level instead.

Because strong bisimulation is a special case of branching bisimulation,
our algorithm can also be used for strong bisimulation. However,
for branching bisimulation we can eliminate $\tau$ cycles to
get a well-founded partition. For strong bisimulation, we will
have to come up with a good heuristic to automatically find
well-founded partitions.

As a final conclusion, we note that inductive signatures for branching
bisimulation improve time and memory requirements compared to
classical signatures, both in a sequential and a distributed
implementation. Of course, distributed minimization can handle larger
graphs that don't fit in the memory of a single machine. Additionally,
the distributed version using 8 cores on 2 nodes consistently beats
the best sequential algorithm in time.

\begin{pdmc}
\bibliographystyle{eptcs} % or whatever you prefer
\end{pdmc}
\begin{report}
\bibliographystyle{plain}
\end{report}
\bibliography{literature}

\end{document}